\newtheorem{theorem}{Theorem}
\newtheorem{example}{Example}
\newtheorem{remark}{Remark}
\newtheorem{corollary}{Corollary}
\newtheorem{lemma}{Lemma}
\begin{document}

\title{Some New Constructions of Quantum MDS Codes}
\author{\small Weijun Fang\thanks{Corresponding Author} $^{,1,2}$  \ Fang-Wei Fu$^{1}$ \\
\small $^1$  Chern Institute of Mathematics and LPMC, Nankai University, Tianjin,  China\\
\small $^2$ Shenzhen International Graduate School, Tsinghua University, Shenzhen,  China \\
\small Email: nankaifwj@163.com, fwfu@nankai.edu.cn\\
}
\date{}
\maketitle
\thispagestyle{empty}
\begin{abstract}
It is an important task to construct quantum maximum-distance-separable (MDS) codes with good parameters. In the present paper, we provide six new classes of $q$-ary quantum MDS codes by using generalized Reed-Solomon (GRS) codes and Hermitian construction. The minimum distances of our quantum MDS codes can be larger than $\frac{q}{2}+1$. Three of these six classes of quantum MDS codes have longer lengths than the ones constructed in \cite{ZG17} and \cite{SYZ17}, hence some of their results can be easily derived from ours via the propagation rule. Moreover, some known quantum MDS codes of specific lengths can be seen as special cases of ours and the minimum distances of some known quantum MDS codes are also improved as well.
\end{abstract}

\small\textbf{Keywords:} Quantum codes, quantum Singleton bound, quantum MDS codes, generalized Reed-Solomon codes, Hermitian construction

\maketitle

\section{Introduction}

Quantum error-correcting codes play an important role in quantum
computing and quantum communication. Just as in classical coding theory, one central theme in quantum error-correction is the construction
of quantum codes that have good parameters.  In \cite{CRSS98}, Calderbank \emph{et al.} presented an effective method to construct nice
quantum codes by using some mathematical techniques which made it possible to construct quantum codes
from classical codes over $\mathbb{F}_{2}$ or $\mathbb{F}_{4}$. Rains \cite{R99},  Ashikhmin and  Knill \cite{AK01} then generalized their results to the nonbinary cases. In particular, one can construct quantum codes via classical codes with Euclidean or Hermitian self-orthogonality properties.

Let $q$ be a prime power. A $q$-ary quantum code is just a vector subspace of the Hilbert space $(\mathbb{C}^{q})^{\bigotimes n}\cong\mathbb{C}^{q^{n}}$, where $\mathbb{C}$ is the field of complex numbers and $n$ is called the length of the quantum code. We use $((n, K, d))_{q}$ or $[[n, k, d]]_{q}$ to denote a $q$-ary quantum code of length
$n$, dimension $K$ and minimum distance $d$, where $k=\log_{q}K$. An $[[n, k, d]]_{q}$ quantum code can detect up to $d - 1$ quantum errors and correct up
to $\lfloor\frac{d-1}{2}\rfloor$ quantum errors. Thus for fixed $n$ and $k$, it is desirable to construct $[[n, k, d]]_{q}$-quantum codes with minimum distance $d$ as large as possible. However, similar to
the classical Singleton bound, the parameters of an $[[n, k, d]]_{q}$ quantum code have to satisfy the quantum Singleton bound:
\begin{lemma} (\cite{R99,AK01,KKKS06} Quantum Singleton Bound)\label{lem1.1}
For any $[[n, k, d]]_{q}$ quantum code, we have
\[2d \leq n - k + 2.\]
\end{lemma}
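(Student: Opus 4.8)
The plan is to derive the bound from the equivalence between minimum distance and erasure correction, combined with an information-theoretic (von Neumann entropy) argument. Recall that an $[[n,k,d]]_q$ quantum code $Q\subseteq(\mathbb{C}^q)^{\otimes n}$ can correct the erasure of any $d-1$ of its $n$ coordinate subsystems. First I would purify: introduce a reference system $R\cong\mathbb{C}^{q^{k}}$ and a pure state $|\Psi\rangle$ on $R\otimes(\mathbb{C}^q)^{\otimes n}$ that is maximally entangled between $R$ and the code space $Q$, so that the reduced state $\rho_R$ is maximally mixed; measuring von Neumann entropy in base $q$, this gives $S(R)=\log_q q^{k}=k$. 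Throughout, for a subset $T$ of the $n+1$ subsystems I write $S(T)$ for the entropy of the corresponding reduced state of $|\Psi\rangle$, and I use only that $|\Psi\rangle$ is globally pure (so $S(T)=S(T^{c})$), that $S$ is subadditive, and that $S(T)\le\log_q\dim\mathcal{H}_T=|T|$ for $T\subseteq[n]$.

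The key structural input is the \emph{decoupling characterization} of erasure correction: if $E\subseteq[n]$ with $|E|\le d-1$, then erasure of $E$ is correctable, and this forces $\rho_{RE}=\rho_R\otimes\rho_E$ (equivalently $S(RE)=S(R)+S(E)$), since the encoded information about $R$ must survive entirely in the complement $E^{c}$. Assuming for the moment that $n\ge 2(d-1)$, I would choose two \emph{disjoint} coordinate sets $E_1,E_2\subseteq[n]$, each of size $d-1$, and let $F=[n]\setminus(E_1\cup E_2)$, so $|F|=n-2(d-1)$. From $S(RE_i)=S(R)+S(E_i)$ together with global purity we get $S(E_2F)=S(RE_1)=S(R)+S(E_1)$ and $S(E_1F)=S(RE_2)=S(R)+S(E_2)$; subadditivity gives $S(E_2F)\le S(E_2)+S(F)$ and $S(E_1F)\le S(E_1)+S(F)$. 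Combining, $S(R)+S(E_1)\le S(E_2)+S(F)$ and $S(R)+S(E_2)\le S(E_1)+S(F)$, and adding these two inequalities yields $S(R)\le S(F)$.

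To conclude, bound $S(F)\le|F|=n-2(d-1)$, so that $k=S(R)\le S(F)\le n-2d+2$, i.e.\ $2d\le n-k+2$, as claimed. It then remains only to dispose of the boundary regime $n<2(d-1)$. If $k\ge 1$, choosing instead \emph{overlapping} sets $E_1,E_2$ of size $d-1$ with $E_1\cup E_2=[n]$ makes $E_1^{c}$ and $E_2^{c}$ disjoint, and correctability of the two erasures means the state of $R$ can be recovered from $E_1^{c}$ alone and from $E_2^{c}$ alone; this clones a system of dimension $q^{k}\ge q>1$, contradicting the no-cloning theorem. Hence $k\ge 1$ already forces $n\ge 2(d-1)$ and the main argument applies; the only case left is $k=0$, where the inequality reads $2d\le n+2$ and can be obtained separately from the self-orthogonality / weight-enumerator (shadow) structure of the code, or cited from \cite{R99,KKKS06}.

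The step I expect to be the main obstacle is making the decoupling characterization of erasure-correctability rigorous: it relies on specializing the Knill--Laflamme error-correction conditions to the ``erasure'' error set supported on $E$, together with the complementary-channel picture that converts correctability into the product condition $\rho_{RE}=\rho_R\otimes\rho_E$. Handling the $n<2(d-1)$ corner and the degenerate $k=0$ case also requires a little care. By contrast, the entropy manipulations — purity, subadditivity, and the dimension bound — are entirely routine once that input is in place.
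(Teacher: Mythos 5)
Your argument is correct in outline, but note that the paper does not prove Lemma \ref{lem1.1} at all: it is quoted from \cite{R99,AK01,KKKS06}, where the bound is established through quantum weight-enumerator (shadow enumerator / linear programming) machinery. What you give instead is the standard information-theoretic proof in the style of Cerf--Cleve: purify the maximally mixed code state against a reference $R$ with $S(R)=k$ (base-$q$ entropy), use the decoupling characterization of erasure correction to get $S(RE_i)=S(R)+S(E_i)$ for disjoint $(d-1)$-sets $E_1,E_2$, and combine purity, subadditivity and the dimension bound $S(F)\le n-2(d-1)$ to conclude $k\le n-2d+2$; your entropy bookkeeping (the two inequalities $S(R)+S(E_1)\le S(E_2)+S(F)$, $S(R)+S(E_2)\le S(E_1)+S(F)$ and their sum) is correct, and your no-cloning disposal of the regime $n<2(d-1)$ with $k\ge 1$ is also sound. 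The comparison: your route is conceptually transparent and needs only purity, subadditivity and the Knill--Laflamme conditions specialized to erasures, whereas the enumerator proofs in the cited references are more combinatorial but handle all cases uniformly, including the degenerate $k=0$ codes, whose distance is defined via enumerators and for which your entropic argument genuinely does not apply (your deferral of that case to \cite{R99,KKKS06} is the right move, but it means your proof is not self-contained there). The remaining burden in your write-up is exactly the one you flag: proving the decoupling statement $\rho_{RE}=\rho_R\otimes\rho_E$ from correctability of the erasure of $E$; this is a standard information--disturbance argument, but it must be either proved or cited explicitly, since it is the only nontrivial input. For the purposes of this paper, which only uses the bound to define quantum MDS codes, either proof strategy (or the citation the authors chose) suffices.
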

 A quantum code achieving this quantum Singleton bound is called a \emph{quantum maximum-distance-separable
(MDS) code}. Just as in the classical case, it is desirable to find more constructions of quantum MDS codes.

In 2001, Ashikhmin and Knill \cite{AK01} gave the following useful theorem
for constructing quantum stabilizer codes from classical codes.

\begin{theorem} (Hermitian Construction)\label{thm1.2}
  If there exists an $[n, k, d]_{q^{2}}$-linear code $C$ with $C^{\perp_{H}} \subseteq C$, where $C^{\perp_{H}}$ is the Hermitian dual code of $C$, then there exists an $[[n, 2k-n, \geq d]]_{q}$-quantum code.
\end{theorem}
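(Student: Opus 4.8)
The plan is to reduce the statement to the standard dictionary between $q$-ary quantum stabilizer codes and classical \emph{additive} codes over $\mathbb{F}_{q^{2}}$ that are self-orthogonal with respect to the trace-Hermitian form (equivalently, after the usual identification $\mathbb{F}_{q^{2}}^{n}\cong\mathbb{F}_{q}^{2n}$, the trace-symplectic form). Equip $\mathbb{F}_{q^{2}}^{n}$ with the Hermitian form $\langle u,v\rangle_{H}=\sum_{i=1}^{n}u_{i}v_{i}^{\,q}$ and the associated $\mathbb{F}_{q}$-bilinear form $\langle u,v\rangle_{a}=\mathrm{Tr}_{q^{2}/q}\!\bigl(\langle u,v\rangle_{H}\bigr)$. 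The fundamental theorem behind all stabilizer constructions (due to Ashikhmin--Knill \cite{AK01}, see also \cite{KKKS06}) states: if $D\subseteq\mathbb{F}_{q^{2}}^{n}$ is $\mathbb{F}_{q}$-additive with $|D|=q^{\,n-k'}$ and $D\subseteq D^{\perp_{a}}$, then there exists an $[[n,k',\geq d']]_{q}$ quantum code, where $d'=\mathrm{wt}\bigl(D^{\perp_{a}}\setminus D\bigr)$ (and $d'=\mathrm{wt}(D^{\perp_{a}})$ if $D=D^{\perp_{a}}$). I would take this as a black box and devote the proof to exhibiting a suitable $D$ and checking its hypotheses.

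First I would set $D:=C^{\perp_{H}}$. Since $C$ is $\mathbb{F}_{q^{2}}$-linear of dimension $k$, the code $D$ is $\mathbb{F}_{q^{2}}$-linear of dimension $n-k$; in particular the hypothesis $C^{\perp_{H}}\subseteq C$ already forces $n-k\le k$, so $2k-n\ge 0$, and $|D|=(q^{2})^{\,n-k}=q^{\,n-(2k-n)}$, which is exactly the size needed to produce a quantum code of dimension $k'=2k-n$.

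Next I would compute the trace-Hermitian dual of $D$. The key observation is that for an $\mathbb{F}_{q^{2}}$-linear code one has $D^{\perp_{a}}=D^{\perp_{H}}$: the inclusion $D^{\perp_{H}}\subseteq D^{\perp_{a}}$ is immediate, and conversely if $v\in D^{\perp_{a}}$ then, since $\alpha u\in D$ for all $u\in D$ and $\alpha\in\mathbb{F}_{q^{2}}$, we get $\mathrm{Tr}_{q^{2}/q}\!\bigl(\alpha\langle u,v\rangle_{H}\bigr)=0$ for every $\alpha$, whence $\langle u,v\rangle_{H}=0$ by non-degeneracy of the trace form, i.e.\ $v\in D^{\perp_{H}}$. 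Therefore $D^{\perp_{a}}=\bigl(C^{\perp_{H}}\bigr)^{\perp_{H}}=C$, and the hypothesis $C^{\perp_{H}}\subseteq C$ becomes precisely $D\subseteq D^{\perp_{a}}$. The black box then yields an $[[n,2k-n,\geq d']]_{q}$ quantum code with $d'=\mathrm{wt}\bigl(D^{\perp_{a}}\setminus D\bigr)\ge\mathrm{wt}\bigl(D^{\perp_{a}}\bigr)=\mathrm{wt}(C)=d$, the inequality holding because passing to a subset cannot decrease the minimum weight. Hence an $[[n,2k-n,\geq d]]_{q}$ quantum code exists, as claimed.

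The substantive content is of course the black box itself, and that is where the real care would go in a self-contained treatment: one introduces the error group generated by the generalized Pauli operators $X(a)$ and $Z(b)$ acting on $\mathbb{C}^{q^{n}}$, checks that two such operators commute exactly when the associated vectors are orthogonal under $\langle\cdot,\cdot\rangle_{a}$, deduces that the abelian group attached to $D$ splits $\mathbb{C}^{q^{n}}$ into joint eigenspaces each of dimension $q^{\,2k-n}$, and finally shows that one such eigenspace corrects all errors of weight less than $d'$. Since the present paper invokes this as Theorem~\ref{thm1.2}, I would treat it as given, so that the entire proof amounts to the short linear-algebra reduction above.
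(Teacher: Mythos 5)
The paper itself offers no proof of Theorem~\ref{thm1.2}: it is quoted from Ashikhmin and Knill \cite{AK01} (see also \cite{KKKS06}), so there is no internal argument to compare against. Your write-up is essentially the standard derivation found in those references: set $D=C^{\perp_{H}}$, note $|D|=(q^{2})^{n-k}=q^{\,n-(2k-n)}$, show that for an $\mathbb{F}_{q^{2}}$-linear code the trace dual coincides with the Hermitian dual (your non-degeneracy argument is fine), so that $C^{\perp_{H}}\subseteq C$ translates into $D\subseteq D^{\perp_{a}}$, and conclude $d'=\mathrm{wt}\bigl(D^{\perp_{a}}\setminus D\bigr)\ge \mathrm{wt}(C)=d$. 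That reduction is correct and is exactly how the cited sources obtain the Hermitian construction from the additive/stabilizer machinery.

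One technical caveat about the black box as you literally state it: the form that governs commutation of the generalized Pauli operators is the trace-\emph{alternating} (symplectic) form on $\mathbb{F}_{q^{2}}^{n}\cong\mathbb{F}_{q}^{2n}$, namely the condition $\mathrm{Tr}\bigl(b\cdot a'-b'\cdot a\bigr)=0$ for $X(a)Z(b)$ and $X(a')Z(b')$; for odd $q$ this is \emph{not} the symmetric form $\mathrm{Tr}_{q^{2}/q}\bigl(\langle u,v\rangle_{H}\bigr)$ you wrote (the two coincide in the relevant sense only for even $q$, where the trace-Hermitian form happens to be alternating). So the general additive-code statement should be phrased with the trace-alternating form, as in \cite{KKKS06}. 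This does not affect your reduction, because for $\mathbb{F}_{q^{2}}$-linear codes the dual with respect to the trace-alternating form also equals the Hermitian dual, and your argument adapts verbatim; but if you keep the black-box formulation you should state it with the correct form, since with the symmetric trace-Hermitian pairing it is not the theorem of \cite{AK01,KKKS06} for general additive codes.
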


Note that the Hermitian dual code of an MDS code is still an MDS code. So we replace the code $C$ by its Hermitian dual $C^{\perp_{H}}$ in Theorem \ref{thm1.2} and obtain the following corollary for the quantum MDS codes.

\begin{corollary} (Hermitian Construction for Quantum MDS Codes)\label{cor1.3}
If there exists an $[n, k, n-k+1]_{q^{2}}$-MDS code $C$ with $C \subseteq C^{\perp_{H}}$, then there exists an $[[n, n-2k, k+1]]_{q}$-quantum MDS code.
\end{corollary}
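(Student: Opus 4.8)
The plan is to apply Theorem~\ref{thm1.2} not to $C$ directly but to its Hermitian dual $D := C^{\perp_H}$, and then to invoke the quantum Singleton bound (Lemma~\ref{lem1.1}) to upgrade the resulting ``$\geq k+1$'' into an exact equality. The whole argument is thus a short dualization-plus-bookkeeping step on top of the two results already in the excerpt.

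First I would record the parameters of $D$. Since the Hermitian inner product on $\mathbb{F}_{q^2}^n$ is non-degenerate, $\dim_{\mathbb{F}_{q^2}} C^{\perp_H} = n - \dim_{\mathbb{F}_{q^2}} C = n-k$, and the double-dual identity $(C^{\perp_H})^{\perp_H} = C$ holds. Because the Hermitian dual of an MDS code is again MDS (as noted in the text preceding the corollary), $D$ is an $[n,\,n-k,\,k+1]_{q^2}$-MDS code. Next I would translate the hypothesis: using the order-reversing property of $(\cdot)^{\perp_H}$ together with the double-dual identity, the assumption $C \subseteq C^{\perp_H}$ is equivalent to $D^{\perp_H} = (C^{\perp_H})^{\perp_H} = C \subseteq C^{\perp_H} = D$, i.e. $D^{\perp_H} \subseteq D$. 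Hence $D$ satisfies exactly the hypothesis of Theorem~\ref{thm1.2}, which then produces an $[[\,n,\ 2(n-k)-n,\ \geq k+1\,]]_q = [[\,n,\ n-2k,\ \geq k+1\,]]_q$ quantum code.

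Finally I would close the gap with Lemma~\ref{lem1.1}: applied to a quantum code of length $n$ and dimension $n-2k$, the quantum Singleton bound gives $2d \leq n - (n-2k) + 2 = 2k+2$, so $d \leq k+1$; combined with $d \geq k+1$ from the previous step this forces $d = k+1$, and the code is an $[[\,n,\,n-2k,\,k+1\,]]_q$ quantum MDS code. There is no serious obstacle here; the only points needing care are the two standard structural facts about the Hermitian form invoked in the second paragraph — non-degeneracy (yielding the dimension formula and $(C^{\perp_H})^{\perp_H}=C$) and the preservation of the MDS property under Hermitian duality — and the latter is already granted in the paper.
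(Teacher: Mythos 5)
Your proposal is correct and follows essentially the same route as the paper, which obtains the corollary precisely by replacing $C$ with its Hermitian dual $C^{\perp_{H}}$ in Theorem~\ref{thm1.2} (using that the Hermitian dual of an MDS code is MDS) and then invoking the quantum Singleton bound to pin down $d=k+1$. Your explicit verification of the dimension count, the double-dual identity, and the containment $D^{\perp_H}\subseteq D$ just spells out what the paper leaves implicit.
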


Given a quantum MDS code, we can obtain a new quantum code with smaller length and minimum distance by the following lemma.
\begin{lemma} (\cite{GR15} Propagation Rule)\label{lem1.4}
If there exists an $[[n, n-2d+2, d]]_{q}$-quantum MDS code, then there exists an $[[n-1, n-2d+3, d-1]]_{q}$-quantum MDS code.
\end{lemma}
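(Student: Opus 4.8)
The plan is to reduce the statement to a manipulation of classical codes and then re-apply the Hermitian construction. An $[[n,n-2d+2,d]]_q$ quantum MDS code corresponds to the choice $k=d-1$ in Corollary \ref{cor1.3}, so --- working directly with the classical code in the self-orthogonal-GRS setting of this paper, or, in general, in the stabilizer/additive-code framework --- we may assume it is produced from a classical $[n,d-1,n-d+2]_{q^2}$ MDS code $C$ with $C\subseteq C^{\perp_{H}}$. It then suffices to construct an $[n-1,d-2,n-d+2]_{q^2}$ MDS code $C'$ with $C'\subseteq (C')^{\perp_{H}}$, because Corollary \ref{cor1.3} applied to $C'$ yields an $[[n-1,(n-1)-2(d-2),(d-2)+1]]_q=[[n-1,n-2d+3,d-1]]_q$ quantum MDS code, exactly as desired. (Note that $d\geq 2$ is forced here, otherwise the target parameters are meaningless.)

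First I would take $C'$ to be the \emph{shortening} of $C$ at one fixed coordinate, say the last: $C'=\{(c_1,\dots,c_{n-1}) : (c_1,\dots,c_{n-1},0)\in C\}$. Since $C$ is MDS of dimension $d-1\geq 1$, no coordinate is identically zero on $C$ (otherwise $C$ would embed in $\mathbb{F}_{q^2}^{n-1}$ and violate the classical Singleton bound), so $\dim C'=d-2$; and a shortened code of an MDS code is again MDS, hence $C'$ is an $[n-1,d-2,n-d+2]_{q^2}$ code. The key point is that shortening $C$ is dual --- under the Hermitian form --- to \emph{puncturing} $C^{\perp_{H}}$ at the same coordinate: since entrywise conjugation commutes with shortening and with deleting a coordinate, the usual shortening/puncturing duality for linear codes gives $(C')^{\perp_{H}}=(C^{\perp_{H}})_{\{n\}}$, the code obtained from $C^{\perp_{H}}$ by deleting the last coordinate. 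Now every vector of $C'$ is the restriction to the first $n-1$ coordinates of some codeword of $C\subseteq C^{\perp_{H}}$, hence lies in $(C^{\perp_{H}})_{\{n\}}=(C')^{\perp_{H}}$; therefore $C'\subseteq (C')^{\perp_{H}}$, and Corollary \ref{cor1.3} finishes the argument.

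For an arbitrary quantum MDS code that is not a priori of this form, I would argue in the same way after passing to the additive/stabilizer description and replacing the Hermitian inner product by the trace-symplectic form: shortening the stabilizer code at a suitable coordinate lowers the length by $1$ and raises the dimension by $1$, puncturing the dual additive code lowers its minimum weight by at most $1$, and the quantum Singleton bound (Lemma \ref{lem1.1}) then pins the new minimum distance down to exactly $d-1$, so the resulting code is MDS. The one step requiring care --- and the main obstacle if one insists on the fully general statement --- is that this puncturing estimate reads the minimum distance off the \emph{dual} additive code, so it relies on a quantum MDS code being pure; in the linear, self-orthogonal-GRS setting relevant to the present paper this subtlety simply does not arise, since $C$ is an honest MDS code and the whole argument stays within the category of linear codes.
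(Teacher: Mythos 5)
The paper gives no proof of this lemma at all: it is quoted from \cite{GR15}, where it is obtained from the general theory of puncturing quantum codes (every quantum MDS code is pure, and a pure $[[n,k,d]]_q$ code with $d\ge 2$ yields a pure $[[n-1,k+1,d-1]]_q$ code, cf.\ \cite{R99,KKKS06}). Your argument takes a different, more elementary route, and its first part is correct and self-contained: shortening an $[n,d-1,n-d+2]_{q^2}$ Hermitian self-orthogonal MDS code at a coordinate does give an $[n-1,d-2,n-d+2]_{q^2}$ MDS code, the shortening/puncturing duality indeed transfers to the Hermitian form because coordinatewise $q$-th powering commutes with deleting a coordinate, the containment $C'\subseteq (C')^{\perp_H}$ follows exactly as you say, and Corollary \ref{cor1.3} then yields the stated $[[n-1,n-2d+3,d-1]]_q$ code. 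Since every invocation of the propagation rule in this paper is applied to codes that are explicitly constructed via Corollary \ref{cor1.3} from Hermitian self-orthogonal GRS codes, your linear-code argument fully covers the way the lemma is used here, and it has the advantage of staying entirely inside classical coding theory.

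However, as a proof of the lemma in the generality in which it is stated (``if there exists an $[[n,n-2d+2,d]]_q$ quantum MDS code\ldots''), there is a gap that you yourself flag: the opening reduction ``we may assume it is produced from a classical Hermitian self-orthogonal MDS code'' is not legitimate, since an arbitrary quantum MDS code need not arise from Corollary \ref{cor1.3} (it need not even be a stabilizer code). Your second paragraph correctly identifies the missing ingredients --- purity of quantum MDS codes and the puncturing/shortening result for (pure) codes in the additive or general setting --- but does not prove them; they are genuine theorems (Rains \cite{R99}, see also \cite{KKKS06}) rather than routine verifications, and the purity step in particular cannot be dispensed with. So either restrict the statement you prove to quantum MDS codes obtained from the Hermitian construction (which is all this paper needs), or close the general case by citing purity of quantum MDS codes and the pure-code puncturing lemma, as is done in \cite{GR15}.
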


In the past decade, a lot of research work has been done for construction of quantum MDS codes and several new families of quantum MDS codes have been found by employing different methods. If the classical MDS conjecture is true, then there are no $q$-ary quantum MDS codes of length $n$ exceeding $q^{2}+1$ except when $q$ is even and $d=4$ or $d=q^{2}$ in which case $n \leq q^{2}+2$ (see \cite{KKKS06}). Quantum MDS codes of length up to $q + 1$ have been constructed for all possible dimensions through classical Euclidean self-orthogonal codes (see \cite{RGB04, GBR04, JX12}). Since the constraint of Euclidean self-orthogonality, the minimum distance of these quantum MDS codes is less than or equal to $\frac{q}{2}+1$. Thus Hermitian self-orthogonal codes are applied to construct quantum MDS codes with larger minimum distance. Some quantum MDS codes of length $n$ with specific values $n=q^{2}+1, q^{2}, \frac{q^{2}+1}{2}$ and minimum distance $d >q/2+1$ are obtained  (see \cite{GBR04, G11, KZ12}). Due to their elegant algebraic structures, constacyclic codes, pseudo-cyclic codes and generalized Reed-Muller codes are also used to construct some quantum MDS codes of length $n$ with $q+1 < n \leq q^{2}+1$ and relatively large minimum distance (see \cite{KZ12, AKS07, KZL14,WZ15,ZG15,ZC14,CLZ15, LMG16, SK05}). In \cite{LXW08}, Li \emph{et al.} first presented a unified framework for constructing quantum MDS codes by employing the classical generalized Reed-Solomon (GRS) codes.  Jin \emph{et al.} \cite{JLLX10}, Jin and Xing \cite{JX12,JX14} generalized and developed the method in \cite{LXW08}, and constructed several new families of quantum MDS codes with flexible parameters. Since then, GRS codes have been widely applied for constructing quantum MDS codes with minimum distance larger than $\frac{q}{2}+1$ in recent years (see \cite{JKW17,ZG17,SYZ17,FF18}).

In this paper, we will construct some new quantum MDS codes with relatively large minimum distance through classical Hermitian self-orthogonal GRS codes. The key point of constructing Hermitian self-orthogonal GRS codes is to find suitable evaluation points $a_{1}, a_{2}, \ldots, a_{n} \in \mathbb{F}_{q^{2}}$, such that a certain system
of homogenous equations over $\mathbb{F}_{q^{2}}$ related to these evaluation points has solutions over $\mathbb{F}_{q}^{*}$ (see Lemma \ref{lem2.1} and Remark \ref{rem2.2}). In \cite{JX14}, Jin and Xing first chose a class of multiplicative subgroups of $\mathbb{F}^{*}_{q^{2}}$ as the evaluation points to construct Hermitian self-orthogonal GRS codes. In \cite{ZG17, SYZ17}, the authors generalized the method of \cite{JX14} and considered some multiplicative subgroups of $\mathbb{F}^{*}_{q^{2}}$ and their cosets as the evaluation points. In the present paper, we consider some multiplicative subgroups of $\mathbb{F}^{*}_{q^{2}}$ and their cosets with more general parameters. Moreover, we add the zero element into them so that we can provide more constructions of new quantum MDS codes with longer lengths. Consequently, some known results can be easily derived from ours by the propagation rule of Lemma \ref{lem1.4}. More precisely, we provide some $[[n, n - 2k, k + 1]]_{q}$-quantum MDS codes with the
following parameters:
\begin{description}
  \item[\textnormal{(i)}] $n=1+r\frac{q^{2}-1}{s}$, and $1 \leq k \leq r\frac{q-1}{s}$, where $s \mid (q-1)$ and $1 \leq r \leq s$ (See Theorem \ref{thm3.2});
  \item[\textnormal{(ii)}] $n=1+r\frac{q^{2}-1}{2s+1}$, and $1 \leq k \leq (s+1)\frac{q+1}{2s+1}-1$, where $q > 2$, $(2s+1) \mid (q+1)$ and $1 \leq r \leq 2s+1$ (See Theorem \ref{thm4.3} (i));
  \item[\textnormal{(iii)}] $n=1+(2t+1)\frac{q^{2}-1}{2s+1}$, and $1 \leq k \leq (s+t+1)\frac{q+1}{2s+1}-1$, where $q > 2$, $(2s+1) \mid (q+1)$ and $0 \leq t \leq s-1$ (See Theorem \ref{thm4.3} (ii));
  \item[\textnormal{(iv)}] $n=1+r\frac{q^{2}-1}{2s}$, and $1 \leq k \leq (s+1)\frac{q+1}{2s}-1$, where $2s \mid (q+1)$ and $2 \leq r \leq 2s$ (See Theorem \ref{thm5.3} (i));
  \item[\textnormal{(v)}] $n=1+(2t+2)\frac{q^{2}-1}{2s}$, and $1 \leq k \leq (s+t+1)\frac{q+1}{2s}-1$, where $2s \mid (q+1)$ and $0 \leq t \leq s-2$ (See Theorem \ref{thm5.3} (ii));
  \item[\textnormal{(vi)}] $n=(2t+1)\frac{q^{2}-1}{2s}$, and $1 \leq k \leq (s+t)\frac{q+1}{2s}-2$, where $2s \mid (q+1)$ and $1 \leq t \leq s-1$ (See Theorem \ref{thm6.3}).
\end{description}

We make some remarks as follows:

\begin{enumerate}
  \item The minimum distances of quantum MDS codes of cases (i)-(vi) can be larger than or equal to $\frac{q}{2}+1$ (for case (i), we let $\frac{r}{s}\geq \frac{q}{2(q-1)}$);
  \item  Applying the propagation rule (see Lemma \ref{lem1.4}) for cases (i), (iv) and (v), we obtain the results presented in \cite[Theorem 4.12]{SYZ17}, \cite[Theorem 4.2]{ZG17} and \cite[Theorem 4.8]{SYZ17}, respectively;
  \item  The case (ii) extends the result of \cite[Theorem 3.2 (i)]{JKW17}  where a stricter condition $\textnormal{gcd}(r,q) = 1$  is required;
  \item  When $r=2t+1$ (resp. $r=2t+2$) and $t >0$, the codes from case (iii) (resp. (v)) have the same length but larger minimum distance than that of case (ii) (resp. (iv));
  \item When $t \geq 2$, the quantum MDS codes from case (vi) have larger minimum distance than that of \cite[Theorem 4.2]{ZG17}.
\end{enumerate}

We list some examples of $[[n, n-2k, k+1]]_{q}$-quantum MDS codes from our constructions as follows.
\begin{description}
  \item[(i)] $5 \mid (q-1)$, $n=1+\frac{4}{5}(q^{2}-1)$, $1 \leq k \leq \frac{4}{5}(q-1)$;
  \item[(ii)] $5 \mid (q+1)$, $n=1+\frac{2}{5}(q^{2}-1)$, $1 \leq k \leq \frac{3}{5}(q+1)-1$;
  \item[(iii)] $7 \mid (q+1)$, $n=1+\frac{5}{7}(q^{2}-1)$, $1 \leq k \leq \frac{6}{7}(q+1)-1$;
  \item[(iv)] $4 \mid (q+1)$, $n=1+\frac{3}{4}(q^{2}-1)$, $1 \leq k \leq \frac{3}{4}(q+1)-1$;
  \item[(v)] $6 \mid (q+1)$, $n=1+\frac{2}{3}(q^{2}-1)$, $1 \leq k \leq \frac{5}{6}(q+1)-1$;
  \item[(vi)] $8 \mid (q+1)$, $n=\frac{7}{8}(q^{2}-1)$, $1 \leq k \leq \frac{7}{8}(q+1)-2$.
\end{description}
To the best of our knowledge, all the above quantum MDS codes are new.

The rest of this paper is organized as follows. In Section 2, we recall some basic results about Hermitian self-orthogonality and generalized Reed-Solomon codes. In Sections 3, 4, 5 and 6, we present six new classes of quantum MDS codes from generalized Reed-Solomon codes. We conclude this paper in Section 7.

\section{Preliminaries}
\label{sec:1}
In this section, we briefly review some basic results about Hermitian self-orthogonality and
generalized Reed-Solomon (GRS for short) codes. In addition, some technical lemmas for our constructions are also presented.

Let $q$  be a prime power. Let $\mathbb{F}_{q}$ be the finite field with $q$ elements and $\mathbb{F}_{q}^{*}$ be the multiplicative
group of nonzero elements of $\mathbb{F}_{q}$. A $q$-ary $[n, k, d]$-linear code is just a vector subspace of $\mathbb{F}_{q}^{n}$ with dimension $k$ and minimum Hamming distance $d$, and $n$ is called the length of the code. It is well known that $n$, $k$ and $d$ have to satisfy the Singleton bound: $d \leq n-k+1$. A code achieving the Singleton bound is called a \emph{maximum distance separable} (MDS) code.

Throughout this paper, we denote the all zero vector by $\textbf{0}$. For a vector $\textbf{c}=(c_{1}, \ldots, c_{n}) \in  \mathbb{F}_{q^{2}}^{n}$, we denote by $\textbf{c}^{i}$ the vector $(c_{1}^{i}, \ldots, c_{n}^{i})$. And $0^{0}$ is set to be 1. For any two vectors $\textbf{x}=(x_{1}, \ldots, x_{n}) \in \mathbb{F}^{n}_{q^{2}}$ and $\textbf{y}=(y_{1}, \ldots, y_{n}) \in \mathbb{F}^{n}_{q^{2}}$, the usual Euclidean product of $\textbf{x}$ and $\textbf{y}$  is defined as $\langle \textbf{x} , \textbf{y} \rangle\triangleq\sum_{i=1}^{n}x_{i}y_{i}$. For a linear code $C$ of length $n$ over $\mathbb{F}_{q^{2}}$,
the Euclidean dual code of $C$ is defined as
\[C^{\perp} := \{\textbf{x} \in \mathbb{F}_{q^{2}}^{n} : \langle \textbf{x}, \textbf{c} \rangle =0 ,\textnormal{ for all } \textbf{c} \in C \},\]
and the Hermitian dual code of $C$ is defined as
\[C^{\perp_{H}} := \{\textbf{x} \in \mathbb{F}_{q^{2}}^{n} : \langle \textbf{x}, \textbf{c}^{q} \rangle =0 ,\textnormal{ for all } \textbf{c} \in C \}.\]
The code $C$ is called Hermitian self-orthogonal if $C \subseteq C^{\perp_{H}}$.
It is easy to show that $C^{\perp_{H}} = (C^{(q)})^{\perp}$, where $C^{(q)} = \{\textbf{c}^{q} : \textbf{c} \in C \}$. For a matrix $A=(a_{ij})$ over $\mathbb{F}_{q^{2}}$,  we denote by $A^{(q)}$ the matrix $(a_{ij}^{q}).$ Let $C$ be a linear code over $\mathbb{F}_{q^{2}}$ with a generator matrix $G$, then $G^{(q)}$ is a generator matrix of $C^{(q)}$ hence a parity-check matrix of $C^{\perp_{H}}$.

Choose $n$ distinct elements $a_{1}, \ldots, a_{n}$ of $\mathbb{F}_{q^{2}}$ and $n$ nonzero elements $v_{1}, \ldots, v_{n}$ of $\mathbb{F}_{q^{2}}^{*}$. Put $\textbf{a}= (a_{1}, \ldots, a_{n})$ and $\textbf{v}=(v_{1}, \ldots, v_{n})$. Then the generalized
Reed-Solomon code over $\mathbb{F}_{q^{2}}$ associated to $\textbf{a}$ and $\textbf{v}$ is defined as follows.
\begin{eqnarray*}
  GRS_{k}(\textbf{a}, \textbf{v}) &\triangleq& \{(v_{1}f(a_{1}), \ldots, v_{n}f(a_{n})) \\
    && : f(x) \in \mathbb{F}_{q^{2}}[x], \textnormal{ and deg}(f(x)) \leq k-1 \}.
\end{eqnarray*}
It is well known that the code $GRS_{k}(\textbf{a}, \textbf{v})$ is a $q^{2}$-ary $[n, k, n - k + 1]$-MDS code.
A generator matrix of $GRS_{k}(\textbf{a}, \textbf{v})$ is given by
\begin{equation*}
  G_{k}(\textbf{a}, \textbf{v})=\left(
      \begin{array}{cccc}
        v_{1} & v_{2} & \cdots & v_{n} \\
        v_{1}a_{1} & v_{2}a_{2} & \cdots & v_{n}a_{n} \\
        \vdots & \vdots & \ddots & \vdots \\
        v_{1}a_{1}^{k-1} & v_{2}a_{2}^{k-1} & \cdots & v_{n}a_{n}^{k-1} \\
      \end{array}
    \right).
\end{equation*}

From the above discussion, we can easily obtain the following useful lemma, which was also given in \cite{ZG17,SYZ17,JX14}.
\begin{lemma} (\cite{ZG17,SYZ17, JX14})\label{lem2.1}
Let $a_{1}, \ldots, a_{n}$ be $n$ pairwise distinct elements of $\mathbb{F}_{q^{2}}$ and let $v_{1}, \ldots, v_{n}$ be $n$ nonzero elements of $\mathbb{F}_{q^{2}}^{*}$. Put $\textbf{a}= (a_{1}, \ldots, a_{n})$ and $\textbf{v}=(v_{1}, \ldots, v_{n})$. Then the GRS code $GRS_{k}(\textbf{a}, \textbf{v})$ is Hermitian self-orthogonal if and only if $\langle \textbf{a}^{qi+j}, \textbf{v}^{q+1} \rangle = 0$, for all $0 \leq i, j \leq k-1$.
\end{lemma}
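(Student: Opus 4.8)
The plan is to reduce the Hermitian self-orthogonality of $GRS_{k}(\textbf{a},\textbf{v})$ to the vanishing of the ``Hermitian Gram matrix'' of its generator matrix $G_{k}(\textbf{a},\textbf{v})$, and then to read off the stated inner-product conditions by a direct computation of the matrix entries.

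First I would invoke the discussion preceding the lemma: if $G=G_{k}(\textbf{a},\textbf{v})$ is the displayed generator matrix of $C:=GRS_{k}(\textbf{a},\textbf{v})$, then $G^{(q)}$ is a parity-check matrix of $C^{\perp_{H}}$ (equivalently, $C^{\perp_{H}}=(C^{(q)})^{\perp}$ and the rows of $G^{(q)}$ span $C^{(q)}$). Since $C$ is generated by the rows of $G$, we have $C\subseteq C^{\perp_{H}}$ if and only if every row of $G$ is Euclidean-orthogonal to every row of $G^{(q)}$, that is, if and only if $G\,(G^{(q)})^{T}=\textbf{0}$.

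Next I would compute the entries of the $k\times k$ matrix $G\,(G^{(q)})^{T}$. Index the rows of $G$ by $j\in\{0,1,\ldots,k-1\}$, so that the $j$-th row is $(v_{1}a_{1}^{j},\ldots,v_{n}a_{n}^{j})$, and likewise index the rows of $G^{(q)}$ by $i\in\{0,1,\ldots,k-1\}$, so that its $i$-th row is $(v_{1}^{q}a_{1}^{qi},\ldots,v_{n}^{q}a_{n}^{qi})$. Then the $(i,j)$ entry of $G\,(G^{(q)})^{T}$ equals
\[
\sum_{\ell=1}^{n} (v_{\ell}a_{\ell}^{j})(v_{\ell}^{q}a_{\ell}^{qi})
=\sum_{\ell=1}^{n} v_{\ell}^{q+1}a_{\ell}^{qi+j}
=\langle \textbf{a}^{qi+j},\textbf{v}^{q+1}\rangle .
\]
Hence $G\,(G^{(q)})^{T}=\textbf{0}$ if and only if $\langle \textbf{a}^{qi+j},\textbf{v}^{q+1}\rangle=0$ for all $0\le i,j\le k-1$, which together with the previous paragraph gives the claim.

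Every step here is a direct manipulation, so I do not anticipate a genuine obstacle; the only points needing a little care are the bookkeeping of which exponent is acted on by the Frobenius map $x\mapsto x^{q}$ (it is the one coming from $G^{(q)}$, producing $a_{\ell}^{qi}$ and $v_{\ell}^{q}$), and the observation that $v_{\ell}^{q}\cdot v_{\ell}=v_{\ell}^{q+1}$, so the weights recombine into $\textbf{v}^{q+1}$ regardless of $i,j$. One may also note that, since $i$ and $j$ each range over the full set $\{0,\ldots,k-1\}$, the family of conditions is unchanged if $qi+j$ is replaced by $qj+i$, matching the symmetric form in which this condition is sometimes stated.
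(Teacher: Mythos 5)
Your proof is correct and follows essentially the same route the paper intends: the paper gives no separate proof, deriving the lemma ``from the above discussion,'' namely that $G^{(q)}$ is a parity-check matrix of $C^{\perp_{H}}$, so self-orthogonality is exactly $G\,(G^{(q)})^{T}=\textbf{0}$, whose $(i,j)$ entry is $\langle \textbf{a}^{qi+j},\textbf{v}^{q+1}\rangle$ as you computed.
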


\begin{remark}\label{rem2.2}
If we set $\textbf{u}=(u_{1},u_{2},\ldots, u_{n}):=\textbf{v}^{q+1}$, then $\textbf{u} \in (\mathbb{F}^{*}_{q})^{n}$. Thus from Lemma \ref{lem2.1}, to construct a Hermitian self-orthogonal MDS code, it is sufficient to make sure that the system of homogenous equations $\langle \textbf{a}^{qi+j}, \textbf{u} \rangle = 0$ (for all $0 \leq i, j \leq k-1$) over $\mathbb{F}_{q^{2}}$ has a solution $\textbf{u} \in (\mathbb{F}^{*}_{q})^{n}$.
\end{remark}

Before giving our constructions, we need two technical lemmas. The first lemma provides a sufficient condition under which a certain system of homogenous equations over $\mathbb{F}_{q^{2}}$ has solutions over $\mathbb{F}^{*}_{q}$.
\begin{lemma}\label{lem2.3}
Suppose $r >0$. Let $A$ be an $r \times (r+1)$ matrix over $\mathbb{F}_{q^{2}}$ and satisfy the following two properties: 1) any $r$ columns of $A$ are linearly independent; 2) $A^{(q)}$ is row equivalent to $A$. Then the following system of homogenous equations
$A\textbf{u}^{T}=\textbf{0}^{T}$
has a solution $\textbf{u}=(u_{0}, u_{1}, \ldots, u_{r}) \in (\mathbb{F}^{*}_{q})^{r+1}$.
\end{lemma}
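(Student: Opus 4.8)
The plan is to analyze the one-dimensional solution space of the linear system $A\mathbf{u}^T = \mathbf{0}^T$ and show that (a) every coordinate of a nonzero solution is nonzero, and (b) some scalar multiple of the solution lies in $\mathbb{F}_q^{r+1}$. Since $A$ is $r \times (r+1)$ and any $r$ of its columns are linearly independent, $A$ has rank $r$, so its right kernel is exactly one-dimensional. Pick any nonzero $\mathbf{w} = (w_0, w_1, \ldots, w_r) \in \mathbb{F}_{q^2}^{r+1}$ spanning this kernel. For property (a): if some coordinate $w_i$ were zero, then the remaining $r$ coordinates would give a nontrivial linear dependence among the $r$ columns of $A$ indexed by $\{0,1,\ldots,r\}\setminus\{i\}$, contradicting property 1). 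Hence $w_i \in \mathbb{F}_{q^2}^*$ for all $i$.

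For property (b), I would use that $A^{(q)}$ is row equivalent to $A$: row equivalence means $A$ and $A^{(q)}$ have the same row space, hence the same right kernel. Applying the coordinatewise $q$-th power (Frobenius) map to $A\mathbf{w}^T = \mathbf{0}^T$ gives $A^{(q)}(\mathbf{w}^{(q)})^T = \mathbf{0}^T$, so $\mathbf{w}^{(q)} = (w_0^q, \ldots, w_r^q)$ is also in the kernel of $A^{(q)}$, which equals the kernel of $A$. Since that kernel is one-dimensional and spanned by $\mathbf{w}$, there is a scalar $\lambda \in \mathbb{F}_{q^2}^*$ with $\mathbf{w}^{(q)} = \lambda \mathbf{w}$, i.e. $w_i^q = \lambda w_i$ for every $i$. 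In particular $\lambda = w_i^{q-1}$ for each $i$, and applying Frobenius once more, $w_i^{q^2} = \lambda^q w_i^q = \lambda^{q+1} w_i$; but $w_i^{q^2} = w_i$ since $w_i \in \mathbb{F}_{q^2}$, so $\lambda^{q+1} = 1$, meaning $\lambda$ has order dividing $q+1$.

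To finish, I would renormalize $\mathbf{w}$ so that the scalar becomes $1$. Since $\lambda$ has order dividing $q+1$ and the norm map $N: \mathbb{F}_{q^2}^* \to \mathbb{F}_q^*$, $x \mapsto x^{q+1}$, is surjective with kernel exactly the elements of order dividing $q+1$, we can write $\lambda = \mu^{q-1}$ for some $\mu \in \mathbb{F}_{q^2}^*$ (the multiplicative group of $\mathbb{F}_{q^2}^*$ is cyclic of order $q^2-1 = (q-1)(q+1)$, so the map $x \mapsto x^{q-1}$ has image exactly the subgroup of order $q+1$, which contains $\lambda$). Set $\mathbf{u} := \mu^{-1}\mathbf{w}$; this is still a nonzero kernel vector with all coordinates in $\mathbb{F}_{q^2}^*$. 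Then $u_i^q = \mu^{-q} w_i^q = \mu^{-q}\lambda w_i = \mu^{-q}\mu^{q-1}\cdot\mu\, u_i = u_i$, so each $u_i \in \mathbb{F}_q$, and being nonzero, $u_i \in \mathbb{F}_q^*$. Thus $\mathbf{u} \in (\mathbb{F}_q^*)^{r+1}$ solves the system.

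I expect the routine linear-algebra parts (one-dimensionality of the kernel, nonvanishing of coordinates) to be straightforward; the one step that needs a little care is the passage from "$w_i^{q-1} = \lambda$ for a common $\lambda$ with $\lambda^{q+1}=1$" to "$\lambda$ is itself a $(q-1)$-th power," which is where the cyclic structure of $\mathbb{F}_{q^2}^*$ (or equivalently Hilbert 90 for the extension $\mathbb{F}_{q^2}/\mathbb{F}_q$) is used to clear the scalar. An alternative to this last step: directly observe that $\lambda = w_0^{q-1}$, so replacing $\mathbf{w}$ by $w_0^{-1}\mathbf{w}$ already makes the first coordinate equal $1$ and forces $\lambda = 1$ after renormalization; then $w_i^q = w_i$ for all $i$ immediately. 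I would present whichever version is cleanest, likely the latter.
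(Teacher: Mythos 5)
Your proof is correct, but it follows a genuinely different (and more self-contained) route than the paper. The paper's proof is short because it outsources the key descent step: it cites \cite[Theorem 2.2]{JX14} (via Property 2)) to get a nonzero solution $\textbf{u}\in\mathbb{F}_q^{r+1}$, and then handles the nonvanishing of coordinates by a coding-theoretic argument --- the code $C$ generated by $A$ is an $[r+1,r,2]$-MDS code, so $C^{\perp}$ is an $[r+1,1,r+1]$-MDS code, and any nonzero codeword of $C^{\perp}$ has full weight. You instead prove everything from scratch: the rank-$r$ hypothesis gives a one-dimensional kernel; column independence gives nonvanishing of every coordinate of a kernel vector $\textbf{w}$ (the same fact the paper extracts from the dual MDS code, just phrased as a direct linear-dependence argument); and then you carry out the Galois descent yourself --- Frobenius maps the kernel of $A$ to the kernel of $A^{(q)}$, which by row equivalence is the same kernel, forcing $\textbf{w}^{(q)}=\lambda\textbf{w}$ with $\lambda^{q+1}=1$, after which rescaling (your cleaner variant $w_0^{-1}\textbf{w}$, which makes $\lambda=w_0^{q-1}$ explicit and avoids any appeal to Hilbert 90 or cyclicity) lands the solution in $(\mathbb{F}_q^*)^{r+1}$. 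What your approach buys is independence from the external reference: you are essentially reproving the relevant special case of \cite[Theorem 2.2]{JX14} in the one-dimensional setting, which makes the lemma self-contained; what the paper's approach buys is brevity and a slick uniform way (MDS duality) to get full weight. Both arguments are sound and yield the same conclusion.
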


\begin{proof}
From Property 1), the rank of $A$ is equal to $r$. By  Property 2) and \cite[Theorem 2.2]{JX14},  the system of homogenous equations
$A\textbf{u}^{T}=\textbf{0}^{T}$ has a nonzero solution $\textbf{u} \in (\mathbb{F}_{q})^{r+1}$. Let $C$ be the linear code over $\mathbb{F}_{q^{2}}$ with generator matrix $A$. Then $C$ is an $[r+1, r, 2]$-MDS code from Property 1) and $\textbf{u}$ is a nonzero codeword of $C^{\perp}$. Note that $C^{\perp}$ is an $[r+1, 1, r+1]$-MDS code, thus $\textbf{u} \in (\mathbb{F}^{*}_{q^{2}})^{r+1}$, hence $\textbf{u} \in (\mathbb{F}^{*}_{q})^{r+1}$. The lemma is proved.
\end{proof}

The second lemma is given as follows.
\begin{lemma}\label{lem2.4}
\begin{description}
  \item[(i)] Suppose $(2s+1) \mid (q+1)$ and $m= \frac{q^{2}-1}{2s+1}$. Let $1 \leq k \leq (s+1+t)\frac{q+1}{2s+1}-1$, where $0\leq t \leq s-1$. Then for any $0 \leq i, j \leq k-1$, $m \mid (qi+j)$ if and only if $qi+j \in \{0, (s-t+1)m, (s-t+2)m, \ldots, (s+t)m\}$.
  \item[(ii)] Suppose $2s \mid (q+1)$ and $m= \frac{q^{2}-1}{2s}$. Let $1 \leq k \leq (s+1+t)\frac{q+1}{2s}-1$, where $0 \leq t \leq s-2$. Then for any $0 \leq i, j \leq k-1$, $m \mid (qi+j)$ if and only if $qi+j \in \{0, (s-t)m, (s-t+1)m, \ldots, (s+t)m\}$.
\end{description}
\end{lemma}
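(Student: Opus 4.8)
The plan is to reduce both statements to a computation with base‑$q$ digits. For part (i), put $\ell := \frac{q+1}{2s+1}$, which is a positive integer by the divisibility hypothesis; then $m = (q-1)\ell$ and $q+1 = (2s+1)\ell$. Since $t \le s-1$ we have $s+1+t \le 2s$, so
\[
k \le (s+1+t)\ell - 1 \le 2s\ell - 1 = (q+1-\ell) - 1 = q - \ell \le q - 1 .
\]
Hence every admissible pair satisfies $0 \le i, j \le k-1 \le q-2$, and in particular $qi+j \in \{0,1,\ldots,q^{2}-q-2\}$ has a unique base‑$q$ expansion whose two digits are exactly $i$ (the quotient) and $j$ (the remainder) on division by $q$.

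The ``if'' direction is immediate, since $0$ and each $cm$ with $s-t+1 \le c \le s+t$ are multiples of $m$. For the ``only if'' direction, assume $m \mid qi+j$ and $qi+j \neq 0$, and write $qi+j = cm$ with $c \ge 1$. From $qi+j \le q^{2}-q-2 < q^{2}-1 = (2s+1)m$ we get $c \le 2s$; this is precisely where the hypothesis $t \le s-1$ enters (it is what keeps $qi+j$ below $q^{2}-1$). Next, because $(2s+1)\mid(q+1)$ forces $q \ge 2s$, we have $1 \le c\ell \le 2s\ell \le q$, and therefore
\[
cm = c\ell(q-1) = (c\ell - 1)\,q + (q - c\ell),
\]
which is a legitimate base‑$q$ expansion since both $c\ell - 1$ and $q - c\ell$ lie in $\{0,\ldots,q-1\}$.

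By uniqueness of the base‑$q$ expansion, $i = c\ell - 1$ and $j = q - c\ell$. The inequality $i \le k-1$ now reads $c\ell \le k \le (s+1+t)\ell - 1 < (s+1+t)\ell$, giving $c \le s+t$; and $j \le k-1$ reads $c\ell \ge q+1-k \ge (2s+1)\ell - \bigl((s+1+t)\ell - 1\bigr) = (s-t)\ell + 1 > (s-t)\ell$, giving $c \ge s-t+1$. Thus $qi+j = cm$ with $c \in \{s-t+1,\ldots,s+t\}$, which together with the case $qi+j = 0$ is exactly the claimed set. Part (ii) is proved the same way: take $\ell := \frac{q+1}{2s}$, so that $m = (q-1)\ell$ and $q+1 = 2s\ell$; the hypothesis $t \le s-2$ now keeps $qi+j$ below $q^{2}-1 = 2sm$, forcing $c \le 2s-1$ and hence $c\ell \le (2s-1)\ell = q+1-\ell \le q$, and the same two inequalities yield $c \in \{s-t,\ldots,s+t\}$.

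I do not foresee a substantive obstacle: once the problem is recast in terms of base‑$q$ digits, the argument is careful bookkeeping. The only places demanding attention are (a) checking that the digits $c\ell - 1$ and $q - c\ell$ really lie in $\{0,\ldots,q-1\}$, which hinges on $c\ell \le q$ and hence on the elementary consequence $q \ge 2s$ (resp. $q \ge 2s-1$) of the divisibility hypothesis; and (b) tracking which extreme value of $c$ is ruled out by the bound on $t$ — namely $c = 2s+1$ in part (i) and $c = 2s$ in part (ii). The translation of the constraints $i,j \le k-1$ into the bounds on $c$ is then a one‑line substitution using $q+1 = (2s+1)\ell$ (resp. $2s\ell$).
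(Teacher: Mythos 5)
Your proof is correct and takes essentially the same route as the paper's: both reduce to $k\le q-1$ so that a nonzero multiple $cm$ of $m$ in range must have $1\le c\le 2s$ (resp. $2s-1$), write $cm=(c\ell-1)q+(q-c\ell)$ with $\ell=\frac{q+1}{2s+1}$ (resp. $\frac{q+1}{2s}$) to read off $i$ and $j$, and then convert $i\le k-1$ and $j\le k-1$ into the two-sided bound on $c$. You simply make explicit the digit-uniqueness argument and the computation for part (ii), which the paper treats implicitly or as ``completely similar.''
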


\begin{proof}
We only need to prove Part (i) since the proof of Part (ii) is completely similar.  According to the conditions, it is easy to see that $k \leq q-1$. Hence, for any $0 \leq i, j \leq k-1$ , we have $qi+j <(q+1)k \leq q^{2}-1$. Suppose $(i, j) \neq (0,0)$. If $qi+j=\ell m=\ell\frac{q^{2}-1}{2s+1}$, then $0< \ell < 2s+1$. Note that
\[ qi+j=\ell\frac{q^{2}-1}{2s+1}=q\left(\frac{\ell(q+1)}{2s+1}-1\right)+\left(q-\frac{\ell(q+1)}{2s+1}\right).\]
Thus
\[i=\frac{\ell(q+1)}{2s+1}-1, j=q-\frac{\ell(q+1)}{2s+1}.\]

If $\ell \geq s+1+t$, then
\[i=\frac{\ell(q+1)}{2s+1}-1 \geq (s+1+t)\frac{q+1}{2s+1}-1 \geq k,\]
which contradicts to the assumption that $i \leq k-1$;

If $\ell \leq s-t$,  then
\[j=q-\frac{\ell(q+1)}{2s+1}\geq (s+1+t)\frac{q+1}{2s+1}-1 \geq k,\]
which contradicts to the assumption that $j \leq k-1$.

Thus $s-t+1 \leq \ell \leq s+t$.
The conclusion follows.
\end{proof}

\section{Quantum MDS codes of length $n=1+r\frac{q^{2}-1}{s}$, where $s \mid (q-1)$}
In this section, we construct a class of quantum MDS codes of length $n=1+r\frac{q^{2}-1}{s}$, where $s \mid (q-1)$ and $1 \leq r \leq s$. We first prove the following lemma.

\begin{lemma}\label{lem3.1}
Let $x_{1}, \ldots, x_{r}$ be $r$ pairwise distinct nonzero elements of $\mathbb{F}_{q}$.
Then the system of equations
\begin{equation}\label{eq1}
 \left\{
\begin{aligned}
  u_{0}+ u_{1}+\cdots +u_{r}  = 0 \\
  x_{1}u_{1}+ x_{2}u_{2}+\cdots +x_{r}u_{r} = 0 \\
  \vdots ~~~~~~~~~~~~~~&  \\
   x_{1}^{r-1}u_{1}+ x_{2}^{r-1}u_{2}+\cdots +x_{r}^{r-1}u_{r}= 0
\end{aligned}\right.
\end{equation}
has a solution $\textbf{u}\triangleq(u_{0}, u_{1},\ldots, u_{r}) \in (\mathbb{F}_{q}^{*})^{r+1}$.
\end{lemma}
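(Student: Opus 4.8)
The plan is to recognize the coefficient matrix of system \eqref{eq1} as an instance of the setup of Lemma \ref{lem2.3}. Write \eqref{eq1} as $A\textbf{u}^{T}=\textbf{0}^{T}$, where $A$ is the $r\times(r+1)$ matrix over $\mathbb{F}_{q^{2}}$ whose column indexed by $0$ is $(1,0,\ldots,0)^{T}$ and whose column indexed by $i$ (for $1\le i\le r$) is $(1,x_{i},x_{i}^{2},\ldots,x_{i}^{r-1})^{T}$. Since $x_{1},\ldots,x_{r}\in\mathbb{F}_{q}$, every entry of $A$ lies in $\mathbb{F}_{q}$, so $A^{(q)}=A$; in particular Property 2) of Lemma \ref{lem2.3} holds trivially.

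For Property 1), I would show that deleting any single column of $A$ leaves an invertible $r\times r$ matrix. Deleting the column indexed by $0$ leaves the Vandermonde matrix in $x_{1},\ldots,x_{r}$, whose determinant equals $\prod_{1\le i<j\le r}(x_{j}-x_{i})\ne 0$ because the $x_{i}$ are pairwise distinct. Deleting a column indexed by some $i_{0}\ge 1$ leaves a matrix whose first column is $(1,0,\ldots,0)^{T}$; expanding the determinant along this column and then factoring $x_{j_{k}}$ out of each of the remaining columns reduces it to $\bigl(\prod_{k}x_{j_{k}}\bigr)\prod_{k<l}(x_{j_{l}}-x_{j_{k}})$, which is nonzero because the $x_{i}$ are nonzero and pairwise distinct. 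Hence any $r$ columns of $A$ are linearly independent, and Lemma \ref{lem2.3} applies directly to give a solution $\textbf{u}=(u_{0},u_{1},\ldots,u_{r})\in(\mathbb{F}_{q}^{*})^{r+1}$, which is exactly the assertion.

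I do not expect a genuine obstacle; the only mildly technical point is the cofactor-plus-Vandermonde determinant computation for the minors obtained after removing a column indexed by $i_{0}\ge 1$, and that is routine. If one preferred to avoid Lemma \ref{lem2.3}, an equivalent direct argument is available: the last $r-1$ equations of \eqref{eq1} form a system of rank $r-1$ in the unknowns $u_{1},\ldots,u_{r}$ (again by the same Vandermonde minor computation), so its solution space is spanned by a single vector $\textbf{w}=(w_{1},\ldots,w_{r})$ all of whose coordinates are nonzero; and if $\sum_{i=1}^{r}w_{i}$ were $0$ then $\textbf{w}$ would satisfy the full system $\sum_{i=1}^{r}x_{i}^{\ell}w_{i}=0$ for $0\le\ell\le r-1$, whose matrix is an invertible Vandermonde matrix, forcing $\textbf{w}=\textbf{0}$, a contradiction. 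Thus $u_{0}:=-\sum_{i=1}^{r}w_{i}\ne 0$ and $(u_{0},w_{1},\ldots,w_{r})$ is the desired solution. Going through Lemma \ref{lem2.3} is the cleaner route, so I would present that one.
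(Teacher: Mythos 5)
Your proposal is correct and follows essentially the same route as the paper: write the system as $A\textbf{u}^{T}=\textbf{0}^{T}$, note $A^{(q)}=A$ since the $x_{i}$ lie in $\mathbb{F}_{q}$, check that every $r$ columns are independent, and invoke Lemma \ref{lem2.3}. The only cosmetic difference is that the paper sees each $r\times r$ minor directly as a Vandermonde matrix in $r$ distinct elements of $\{0,x_{1},\ldots,x_{r}\}$ (using the convention $0^{0}=1$), whereas you compute the minors containing the column $(1,0,\ldots,0)^{T}$ by cofactor expansion; your additional Lemma-\ref{lem2.3}-free argument is a valid bonus but not needed.
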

\begin{proof}
Let
\[A=\left(
      \begin{array}{ccccc}
        1 & 1 & 1 & \cdots &  1\\
        0 & x_{1} & x_{2} & \cdots & x_{r} \\
        \vdots & \vdots & \vdots & \ddots & \vdots \\
        0 & x_{1}^{r-1} & x_{2}^{r-1} & \cdots & x_{r}^{r-1}\\
      \end{array}
    \right)
.\]
Then the system (\ref{eq1}) of equations is equivalent to the following equation
\begin{equation*}
  A\textbf{u}^{T}=\textbf{0}^{T}.
\end{equation*}
Note that any $r$ columns of $A$ form a Vandermonde matrix, which is invertible. Thus any $r$ columns of $A$ are linearly independent. Since $x_{1}, \ldots, x_{r} \in \mathbb{F}_{q}$, $A^{(q)}=A$. The conclusion then follows from Lemma \ref{lem2.3}.
\end{proof}

\vskip 1mm

Set $m = \frac{q^{2}-1}{s}$. Let $\theta \in \mathbb{F}_{q^{2}}$ be an $m$-th primitive root of unity, and let $\langle\theta \rangle$ be the cyclic subgroup of $\mathbb{F}_{q^{2}}^{*}$ generated by $\theta$.  Let $\beta_{1}, \ldots , \beta_{r} \in \mathbb{F}_{q^{2}}^{*}$
such that $\{\beta_{i} \langle \theta \rangle\}^{r}_{i
=1}$ represent distinct cosets of $\mathbb{F}_{q^{2}}^{*}/\langle \theta \rangle$. Put
\[\textbf{a}=(0, \beta_{1}, \beta_{1}\theta, \ldots, \beta_{1}\theta^{m-1}, \ldots , \beta_{r}, \beta_{r}\theta, \ldots, \beta_{r}\theta^{m-1}) \in \mathbb{F}_{q^{2}}^{n}.\]
Set
\[\textbf{v}=(v_{0},\underbrace{ v_{1},\ldots,v_{1}}_{m\textnormal{ times}},\ldots,\underbrace{v_{r},\ldots,v_{r}}_{m\textnormal{ times}}),\]
where $v_{0}, v_{1},\ldots, v_{r} \in \mathbb{F}_{q^{2}}^{*}$.
Then
\begin{equation}\label{eq2}
  \langle\textbf{a}^{0}, \textbf{v}^{q+1}\rangle = v_{0}^{q+1}+ (v_{1}^{q+1}+\cdots+v_{r}^{q+1})m.
\end{equation}
And for any $(i,j)\neq (0,0)$, we have
\[\langle\textbf{a}^{qi+j}, \textbf{v}^{q+1}\rangle
=\sum_{\ell=1}^{r}\beta_{\ell}^{qi+j}v_{\ell}^{q+1}\sum_{\nu=0}^{m-1}\theta^{\nu(qi+j)},\]
thus
\[\langle\textbf{a}^{qi+j}, \textbf{v}^{q+1}\rangle=0, \textnormal{ when }m \nmid (qi+j),\]
and
\begin{equation}\label{eq3}
  \langle\textbf{a}^{qi+j}, \textbf{v}^{q+1}\rangle=m\sum\limits_{\ell=1}^{r}\beta_{\ell}^{qi+j}v_{\ell}^{q+1},  \textnormal{ when }m \mid (qi+j).
\end{equation}

Now, our first construction is given as follows.

\begin{theorem}\label{thm3.2}
Let $q$ be a prime power. Suppose $s \mid (q-1)$ and $1 \leq r \leq s$. Put $n=1+r\frac{q^{2}-1}{s}$. Then for any $1 \leq k \leq r\frac{q-1}{s}$, there exists an $[[n, n-2k, k+1]]_{q}$-quantum MDS code.
\end{theorem}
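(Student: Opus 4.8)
The plan is to apply Corollary \ref{cor1.3}, so it suffices to construct an $[n, k, n-k+1]_{q^2}$-MDS code $C$ with $C \subseteq C^{\perp_H}$ for every $k$ in the stated range. I will take $C = GRS_k(\textbf{a}, \textbf{v})$ with the evaluation vector $\textbf{a}$ and column-multiplier vector $\textbf{v}$ as set up just before the theorem statement. Since a GRS code is automatically MDS, the only thing to verify is Hermitian self-orthogonality, and by Lemma \ref{lem2.1} (together with Remark \ref{rem2.2}) this reduces to exhibiting $\textbf{u} = \textbf{v}^{q+1} \in (\mathbb{F}_q^*)^n$ of the special repeated shape displayed before the theorem, namely $\textbf{u} = (u_0, u_1, \dots, u_1, \dots, u_r, \dots, u_r)$ with each $u_\ell$ repeated $m$ times, such that $\langle \textbf{a}^{qi+j}, \textbf{v}^{q+1}\rangle = 0$ for all $0 \le i, j \le k-1$.

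The second step is to reduce this system of equations using the explicit computations in \eqref{eq2} and \eqref{eq3}. The key arithmetic observation is that for $1 \le k \le r\frac{q-1}{s} \le q-1$, whenever $(i,j) \ne (0,0)$ with $0 \le i,j \le k-1$ we have $0 < qi+j < q^2 - 1 = sm$, and I claim $m \mid (qi+j)$ forces $qi+j = 0$ in this range; more carefully, if $qi+j = \ell m$ with $0 < \ell < s$ then, writing $\ell m = q(\ell\frac{q+1}{s} - 1) + (q - \ell\frac{q+1}{s})$, one gets $i = \ell\frac{q+1}{s} - 1 \ge \frac{q+1}{s} - 1$ and bounds on $i,j$ against $k-1 \le r\frac{q-1}{s} - 1$ that I expect to rule out all such $\ell$ — this is the analogue of Lemma \ref{lem2.4}, tailored to the $s \mid (q-1)$ case. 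Once that is established, the only nontrivial equation coming from a pair with $m \mid (qi+j)$ and $(i,j)\ne(0,0)$ is the one at $qi+j = 0$, i.e. $\langle \textbf{a}^0, \textbf{v}^{q+1}\rangle$; but actually since $i,j$ both vary over $0,\dots,k-1$ and $(i,j)=(0,0)$ is allowed, we get exactly one constraint, equation \eqref{eq2}: $u_0 + (u_1 + \cdots + u_r)m = 0$ in $\mathbb{F}_{q^2}$, where I should note $m \bmod p$ is a nonzero element of $\mathbb{F}_q$ since $m = \frac{q^2-1}{s}$ and $s \mid (q-1)$ gives $m = (q+1)\frac{q-1}{s}$, a multiple of $q+1$.

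Wait — I also need to handle pairs $(i,j)$ with $0 < qi+j$ and $m \nmid (qi+j)$, but those contribute the zero equation by the line before \eqref{eq3}, so they impose nothing. Hence the entire Hermitian self-orthogonality condition collapses to the single linear equation $u_0 + m(u_1 + \cdots + u_r) = 0$ over $\mathbb{F}_{q^2}$ in the unknowns $u_0, \dots, u_r \in \mathbb{F}_q^*$. The final step is to invoke Lemma \ref{lem3.1}: choosing $r$ pairwise distinct nonzero elements $x_1, \dots, x_r \in \mathbb{F}_q$ (possible since $r \le s \le q-1$), that lemma produces $\textbf{u} \in (\mathbb{F}_q^*)^{r+1}$ satisfying $u_0 + u_1 + \cdots + u_r = 0$ together with the higher Vandermonde rows; in particular its first row gives $u_0 + u_1 + \cdots + u_r = 0$. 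But my surviving equation has the coefficient $m$ on $u_1, \dots, u_r$, not $1$ — so rather than use Lemma \ref{lem3.1} verbatim I would either rescale (replace $u_\ell$ by $m\, u_\ell$ for $\ell \ge 1$, still in $\mathbb{F}_q^*$) or, more cleanly, observe that we do not even need the Vandermonde structure here: the single equation $u_0 = -m(u_1 + \cdots + u_r)$ in $\mathbb{F}_q$ visibly has a solution with all entries nonzero (pick any nonzero $u_1, \dots, u_r$ whose sum is nonzero, which exists because $r \le q-1 < q$, then set $u_0$ accordingly). Taking $v_\ell \in \mathbb{F}_{q^2}^*$ with $v_\ell^{q+1} = u_\ell$ (the norm map $\mathbb{F}_{q^2}^* \to \mathbb{F}_q^*$ is surjective) completes the construction, giving an $[[n, n-2k, k+1]]_q$ quantum MDS code by Corollary \ref{cor1.3}. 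The main obstacle is the counting step: proving cleanly that no pair $(i,j)$ in the box $[0,k-1]^2 \setminus \{(0,0)\}$ has $m \mid (qi+j)$, which is exactly where the bound $k \le r\frac{q-1}{s}$ is used and which mirrors Lemma \ref{lem2.4} but must be redone for the present divisibility $s \mid (q-1)$.
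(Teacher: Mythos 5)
There is a genuine gap, and it is precisely at the step you yourself flag as "the main obstacle": the claim that no pair $(i,j)\neq(0,0)$ in the box $0\le i,j\le k-1$ can satisfy $m\mid(qi+j)$ is false. Here $m=\frac{q^{2}-1}{s}=(q+1)\frac{q-1}{s}$, so for $1\le \mu\le r-1$ one has the base-$q$ decomposition
\[
\mu m \;=\; q\cdot\Bigl(\mu\tfrac{q-1}{s}\Bigr)+\mu\tfrac{q-1}{s},
\]
i.e.\ $qi+j=\mu m$ with $i=j=\mu\frac{q-1}{s}$. Since $k$ may be as large as $r\frac{q-1}{s}$, we have $\mu\frac{q-1}{s}\le (r-1)\frac{q-1}{s}\le k-1$, so all $r-1$ of these pairs lie inside the box. (Concretely: $q=5$, $s=r=4$, $m=6$, $k=4$ allows $(i,j)=(1,1)$ with $qi+j=6=m$.) Your attempted decomposition $\ell m=q(\ell\frac{q+1}{s}-1)+(q-\ell\frac{q+1}{s})$ is the one from Lemma \ref{lem2.4}, which is tailored to divisors of $q+1$; here $s\mid(q-1)$, so $\frac{q+1}{s}$ need not even be an integer, and the conclusion you hoped for does not hold.

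Consequently the Hermitian self-orthogonality condition does \emph{not} collapse to the single equation $u_{0}+m(u_{1}+\cdots+u_{r})=0$; by Eq.~(\ref{eq3}) each surviving exponent $qi+j=\mu m$ ($1\le\mu\le r-1$) imposes the additional equation $\sum_{\ell=1}^{r}\beta_{\ell}^{\mu m}u_{\ell}=\sum_{\ell=1}^{r}x_{\ell}^{\mu}u_{\ell}=0$, where $x_{\ell}=\beta_{\ell}^{m}$. This is exactly why the paper cannot "drop the Vandermonde structure" as you propose: it needs the full system (\ref{eq1}), and the whole content of Lemma \ref{lem3.1} (via Lemma \ref{lem2.3}, using that the $x_{\ell}$ are pairwise distinct elements of $\mathbb{F}_{q}$ because $(q+1)\mid m$ and the $\beta_{\ell}$ lie in distinct cosets) is that this larger system still admits a solution with all coordinates in $\mathbb{F}_{q}^{*}$. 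Your remaining steps (choice of $\textbf{a},\textbf{v}$, norm surjectivity, the scaling $v_{0}^{q+1}=u_{0}m$, and the appeal to Lemma \ref{lem2.1} and Corollary \ref{cor1.3}) agree with the paper, but without the Vandermonde equations the constructed $GRS_{k}(\textbf{a},\textbf{v})$ is not Hermitian self-orthogonal for $k$ near the upper bound, so the proof as proposed does not go through.
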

\begin{proof}
Keep the notations as above. Let $x_{\ell}=\beta_{\ell}^{m}$, for $\ell = 1, \ldots, r$. Then $x_{1}, \ldots, x_{r}$ are pairwise distinct. Indeed, if $x_{\ell} = x_{\ell'}$ for some $1 \leq \ell \neq \ell' \leq r$, then $(\frac{\beta_{\ell}}{\beta_{\ell'}})^{\frac{q^{2}-1}{s}}=1$ hence $\frac{\beta_{\ell}}{\beta_{\ell'}} \in \langle \theta \rangle$. This is impossible since $\beta_{\ell}$ and $\beta_{\ell'}$ lie in two distinct cosets of $\mathbb{F}_{q^{2}}^{*}/\langle \theta \rangle$.  Note that $(q+1) \mid m$, so $x_{\ell} \in \mathbb{F}_{q}$. Then, according to Lemma \ref{lem3.1}, there exists a vector $\textbf{u}=(u_{0}, u_{1}, \ldots, u_{r}) \in (\mathbb{F}_{q}^{*})^{r+1}$ which is a solution of the system (\ref{eq1}) of equations.

For $i=1, 2, \ldots, r$, we let $v_{i} \in \mathbb{F}_{q^{2}}^{*}$ such that $v_{i}^{q+1}=u_{i}$ and let $v_{0} \in \mathbb{F}_{q^{2}}^{*}$ such that $v_{0}^{q+1}=u_{0}m$. Then from Eq. (\ref{eq2}),
\begin{eqnarray*}
  \langle\textbf{a}^{0}, \textbf{v}^{q+1}\rangle &=& v_{0}^{q+1}+ (v_{1}^{q+1}+\cdots+v_{r}^{q+1})m \\
    &=& u_{0}m+(u_{1}+\cdots+u_{r})m=0.
\end{eqnarray*}
Since $1 \leq k \leq r\frac{q-1}{s}$, $qi+j \leq (q+1)(k-1)<r\frac{q^{2}-1}{s}=rm$. Thus, for any $0 \leq i, j\leq k-1$, $m \mid (qi+j)$ only if $qi+j=\mu m$ for some $0 \leq \mu \leq r-1$. Thus by Eq. (\ref{eq3}), when $qi+j=\mu m$ ($1 \leq \mu \leq r-1$),  we have
 \[\langle\textbf{a}^{qi+j}, \textbf{v}^{q+1}\rangle= m\sum_{\ell=1}^{r}\beta_{\ell}^{\mu m}v_{\ell}^{q+1}=m\sum_{\ell=1}^{r}x_{\ell}^{\mu}u_{\ell}=0.\]
In summary,
\[\langle \textbf{a}^{qi+j}, \textbf{v}^{q+1} \rangle =0,\textnormal{ for all } 0 \leq i, j \leq k-1.\]
By Lemma \ref{lem2.1}, $GRS_{k}(\textbf{a}, \textbf{v})$ is a Hermitian self-orthogonal MDS code with parameters $[n, k, n-k+1]$. The conclusion then follows from Corollary \ref{cor1.3}.
\end{proof}
 \begin{remark}
When $\frac{r}{s}>\frac{q}{2(q-1)}$, the quantum codes constructed in Theorem \ref{thm3.2} have minimum distance $r\frac{q-1}{s}+1 >\frac{q}{2}+1.$
\end{remark}

Applying the propagation rule (see Lemma \ref{lem1.4}) for Theorem \ref{thm3.2}, we immediately obtain the following corollary which is  one of main results in \cite{SYZ17}.

\begin{corollary} (\cite[Theorem 4.12]{SYZ17})\label{cor3.4}
Let $q$ be a prime power. Let $s \mid (q-1)$ and $1 \leq r \leq s$. Put $n=r\frac{q^{2}-1}{s}$. Then for any $1 \leq k \leq r\frac{q-1}{s}-1$, there exists an $[[n, n-2k, k+1]]_{q}$-quantum MDS code.
\end{corollary}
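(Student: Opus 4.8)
The plan is to simply invoke the Propagation Rule (Lemma \ref{lem1.4}) applied to the quantum MDS code furnished by Theorem \ref{thm3.2}. First I would fix $q$ a prime power, $s \mid (q-1)$, and $1 \le r \le s$, and set $N = 1 + r\frac{q^2-1}{s}$. Theorem \ref{thm3.2} guarantees, for every $1 \le k' \le r\frac{q-1}{s}$, the existence of an $[[N,\, N-2k',\, k'+1]]_q$-quantum MDS code. Writing $d = k'+1$, this is an $[[N,\, N-2d+2,\, d]]_q$-quantum MDS code, exactly the shape required by the hypothesis of Lemma \ref{lem1.4}.

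Now given the target parameters: for $1 \le k \le r\frac{q-1}{s} - 1$ we want an $[[n,\, n-2k,\, k+1]]_q$ code with $n = r\frac{q^2-1}{s} = N - 1$. I would apply Lemma \ref{lem1.4} with the choice $k' = k+1$; note that $1 \le k' \le r\frac{q-1}{s}$, so Theorem \ref{thm3.2} indeed applies and produces an $[[N,\, N-2k'+2,\, k'+1]]_q$-quantum MDS code, i.e. an $[[N,\, N-2(k+1),\, k+2]]_q$-quantum MDS code. The Propagation Rule then yields an $[[N-1,\, N-2(k+1)+1,\, k+1]]_q = [[n,\, n-2k,\, k+1]]_q$-quantum MDS code, which is precisely the assertion.

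There is essentially no obstacle here: the only things to check are the bookkeeping that $n = N-1$ and that the dimension and distance shift correctly under the rule, together with the observation that the admissible range $1 \le k \le r\frac{q-1}{s}-1$ for the shorter code corresponds exactly to the range $1 \le k' \le r\frac{q-1}{s}$ of Theorem \ref{thm3.2} after the substitution $k' = k+1$. I would present the argument in two or three lines, citing Theorem \ref{thm3.2} for the length-$N$ code and Lemma \ref{lem1.4} for the length reduction, and conclude that all resulting codes are quantum MDS since the quantum Singleton bound $2(k+1) = n - (n-2k) + 2$ is met with equality.
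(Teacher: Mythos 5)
Your proposal is correct and is exactly the paper's argument: the paper obtains Corollary \ref{cor3.4} by applying the propagation rule of Lemma \ref{lem1.4} to the length-$\bigl(1+r\frac{q^{2}-1}{s}\bigr)$ codes of Theorem \ref{thm3.2}, with the same substitution $k'=k+1$. (Only a slip to note: Theorem \ref{thm3.2} gives an $[[N, N-2k', k'+1]]_{q}$ code, not $[[N, N-2k'+2, k'+1]]_{q}$, but your subsequent rewriting as $[[N, N-2(k+1), k+2]]_{q}$ is the correct form and the rest of the bookkeeping goes through.)
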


On the other hand, taking $r=s$ in Theorem \ref{thm3.2}, we obtain the following known result.

\begin{corollary} (\cite{GBR04})\label{cor3.5}
Let $q$ be a prime power. Then for any $1 \leq k \leq q-1$, there exists a $[[q^{2}, q^{2}-2k, k+1]]_{q}$-quantum MDS code.
\end{corollary}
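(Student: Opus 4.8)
The final statement to prove is Corollary \ref{cor3.5}: for any $1 \leq k \leq q-1$, there exists a $[[q^{2}, q^{2}-2k, k+1]]_{q}$-quantum MDS code. The plan is to derive this directly from Theorem \ref{thm3.2} by a judicious choice of parameters, so essentially no new work is needed.

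First I would observe that the length in Theorem \ref{thm3.2} is $n = 1 + r\frac{q^{2}-1}{s}$, and we want $n = q^{2}$. Setting $r = s$ achieves exactly this: $n = 1 + s\cdot\frac{q^{2}-1}{s} = 1 + (q^{2}-1) = q^{2}$. The hypotheses of Theorem \ref{thm3.2} require $s \mid (q-1)$ and $1 \leq r \leq s$; taking $s = q-1$ (so that $s \mid (q-1)$ trivially) and $r = s = q-1$ satisfies all constraints. Then the dimension bound becomes $1 \leq k \leq r\frac{q-1}{s} = (q-1)\cdot\frac{q-1}{q-1} = q-1$, which is precisely the range claimed in the corollary.

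With these substitutions, Theorem \ref{thm3.2} produces an $[[n, n-2k, k+1]]_{q}$-quantum MDS code with $n = q^{2}$, i.e.\ a $[[q^{2}, q^{2}-2k, k+1]]_{q}$-quantum MDS code, for every $1 \leq k \leq q-1$. That is exactly the assertion of Corollary \ref{cor3.5}, so the proof is just: apply Theorem \ref{thm3.2} with $s = q-1$ and $r = q-1$.

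There is no real obstacle here — the only thing to double-check is that the choice $s = q-1$ is legitimate (it is, since any divisor of $q-1$ is allowed and $q-1$ divides itself) and that the resulting $k$-range matches. One might also remark, as the paper does for Corollary \ref{cor3.4}, that this recovers a previously known construction (from \cite{GBR04}), so the corollary is stated mainly to highlight that the classical quantum MDS codes of length $q^{2}$ sit inside the present framework as the special case $r = s$.
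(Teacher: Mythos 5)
Your proposal is correct and matches the paper's own derivation: the paper obtains Corollary \ref{cor3.5} simply by taking $r=s$ in Theorem \ref{thm3.2}, which gives $n=q^{2}$ and $1\leq k\leq q-1$ for any admissible $s$ (your specific choice $s=r=q-1$ is just one valid instance of this). No gaps; nothing further is needed.
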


In the following example, a new family of quantum MDS codes is given by Theorem \ref{thm3.2}.
\begin{example}
Let $(r, s) = (4, 5)$ in Theorem \ref{thm3.2}. Then when $5 \mid (q-1)$, there exists an $[[1+\frac{4}{5}(q^{2}-1), 1+\frac{4}{5}(q^{2}-1)-2k, k+1]]_{q}$ quantum MDS code for any $1 \leq k \leq \frac{4}{5}(q-1)$.
\end{example}

\section{Quantum MDS codes of length $n=1+r\frac{q^{2}-1}{2s+1}$, where $(2s+1) \mid (q+1)$}
In this section, we construct quantum MDS codes of length $n=1+r\frac{q^{2}-1}{2s+1}$, where $(2s+1) \mid (q+1)$. If $r=2s+1$, then $n=q^{2}$. The $q$-ary quantum MDS codes of length $q^{2}$ have been already constructed in \cite{GBR04} (see also Corollary \ref{cor3.5}). To simplify the following discussion, we assume that $1 \leq r < 2s+1$. Set $m=\frac{q^{2}-1}{2s+1}$. Before giving our construction, we need the following lemmas.

\begin{lemma}\label{lem4.1}
Suppose that $q > 2$ and $r \geq 1$. Then there exist $u_{0}, u_{1}, \ldots, u_{r} \in \mathbb{F}_{q}^{*}$ such that
 \[\sum_{i=0}^{r}u_{i}=0.\]
\end{lemma}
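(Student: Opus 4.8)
\textbf{Proof proposal for Lemma \ref{lem4.1}.}

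The plan is to exhibit the required elements explicitly, distinguishing the parity of $r$ and making light use of the hypothesis $q>2$. The key observation is that in any field the sum of two copies of an element cancels an additive inverse, so one can pad a short list up to any odd or even length. First I would treat the case where $r$ is odd, say $r=2m-1$ with $m\geq 1$: then taking $u_0=u_1=\cdots=u_{m-1}=1$ and $u_m=\cdots=u_{2m-1}=-1$ gives $m$ copies of $1$ and $m$ copies of $-1$, all lying in $\mathbb{F}_q^{*}$, with $\sum_{i=0}^{r}u_i = m\cdot 1 + m\cdot(-1) = 0$. Note $-1 \neq 0$ always, so these are genuinely nonzero, and this argument needs nothing beyond $q\geq 2$.

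Next I would handle the case where $r$ is even, say $r=2m$ with $m\geq 1$, which has an odd number $r+1=2m+1$ of summands; this is where the hypothesis $q>2$ enters. Pick any element $w\in\mathbb{F}_q\setminus\{0,1\}$, which exists precisely because $q>2$ (so $|\mathbb{F}_q^*|\geq 2$ forces a nonzero element other than $1$; more simply $\mathbb{F}_q$ has at least three elements). Then $w$, $1-w$, and $-1$ are the relevant building blocks: I would set aside one coordinate equal to $w$, one equal to $-w$ — wait, that still leaves an odd count — so more carefully, take $u_0 = w$, $u_1 = -w$ (these cancel and are nonzero since $w\neq 0$), which reduces to the case of $2m-1$ summands, already handled by the odd case above for $m\geq 1$. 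When $m=0$, i.e. $r=0$, the single condition is $u_0=0$ with $u_0\in\mathbb{F}_q^{*}$, which is impossible — but the lemma's hypothesis is $r\geq 1$, so this degenerate case does not arise, and for $r=2$ (the smallest even case) one can simply take $(u_0,u_1,u_2)=(w,-w,?)$; this still leaves $u_2=0$, so instead use $(1,1,-2)$ when $\mathrm{char}\,\mathbb{F}_q\neq 2$, or $(w, 1, w+1)$ in characteristic $2$ — but cleanest is: for even $r\geq 2$ take $u_0=w$, $u_1=1-w$... no. I will reorganize below.

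The clean unified statement: for $r$ odd use $m$ ones and $m$ minus-ones as above; for $r$ even (hence $r+1$ odd, $r+1\geq 3$), write $r+1 = 3 + 2(m-1)$ with $m\geq 1$, take three coordinates equal to $w,\,w,\,-2w$ when $\mathrm{char}\neq 2$ (here $-2w\neq 0$ since $w\neq 0$ and $2\neq 0$), or equal to $w,\,1,\,w+1$ when $\mathrm{char}=2$ (nonzero since $w\notin\{0,1\}$, and $w+1\neq 0$ as $w\neq 1=-1$), and pad the remaining $2(m-1)$ coordinates with $(m-1)$ ones and $(m-1)$ minus-ones; the sum telescopes to $0$ and every coordinate is a unit. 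I expect the main obstacle — really the only subtlety — to be keeping the small even cases honest: one must ensure the explicit triples never force a coordinate to $0$, which is exactly why $q>2$ (equivalently, the existence of $w\neq 0,1$, or simply the availability of $2\neq 0$ together with $1\neq 0$) is invoked. Once the parity split and the padding trick are in place, the verification $\sum u_i = 0$ is immediate.
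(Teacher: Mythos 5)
Your proposal is correct, but it takes a genuinely different route from the paper. You give an explicit construction split by the parity of $r$ and the characteristic: for odd $r$ you pair up $1$'s and $-1$'s, and for even $r$ you anchor the sum with a zero-sum triple ($w,w,-2w$ in odd characteristic, $w,1,w+1$ in characteristic $2$ with $w\notin\{0,1\}$) and pad with $\pm 1$ pairs; the final ``clean unified statement'' paragraph is a complete and valid argument, even though the middle of your write-up contains several false starts that should be deleted. The paper instead argues by induction on $r$: the case $r=1$ is $(1,-1)$, and for the step one takes a solution $u_0,\ldots,u_{r-2},u$ with $r$ nonzero summands and splits the nonzero element $u$ as $u_{r-1}+u_r$ with $u_{r-1}\in\mathbb{F}_q^{*}\setminus\{u\}$, which is where $q>2$ enters. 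The induction is shorter and needs no case analysis on parity or characteristic; your construction is longer but fully explicit and makes visible exactly where the hypothesis $q>2$ is indispensable (the even-$r$, characteristic-$2$ case, e.g.\ $r=2$ over $\mathbb{F}_2$), while the odd-$r$ case works already for $q=2$. Both uses of $q>2$ amount to the same fact, namely that $\mathbb{F}_q^{*}$ contains at least two elements.
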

\begin{proof}
 We prove this lemma by induction on $r$. If $r=1$, this is trivial. For $r \geq 2$, by induction, the equation $\sum_{i=0}^{r}u_{i}=0$ has solutions $u_{0}, \ldots, u_{r-2}, (u_{r-1}+u_{r}):=u \in \mathbb{F}_{q}^{*}$. Now, take $u_{r-1} \in \mathbb{F}_{q}^{*}\backslash\{u\}$ and $u_{r}=u-u_{r-1} \neq 0$. The desired conclusion follows.
\end{proof}

\begin{lemma}\label{lem4.2}
Suppose $(2s+1) \mid (q+1)$ and $m=\frac{q^{2}-1}{2s+1}$. Let $\omega$ be a primitive element of $\mathbb{F}_{q^{2}}$ and $r=2t+1$, where $0 \leq t \leq s-1$. Then the following system of equations
\begin{equation}\label{eq4}
 \left\{
\begin{aligned}
   \sum_{\ell=0}^{r}u_{\ell} & =0 \\
  \sum_{\ell=1}^{r}\omega^{\ell \mu m}u_{\ell} &=0,\textnormal{ for }\mu=s-t+1, \ldots, s+t,
\end{aligned}\right.
\end{equation}
has a solution $\textbf{u}\triangleq(u_{0}, u_{1}, \ldots, u_{r}) \in (\mathbb{F}_{q}^{*})^{r+1}.$
\end{lemma}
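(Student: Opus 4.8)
The plan is to rewrite the system $(\ref{eq4})$ as one matrix equation and then invoke Lemma $\ref{lem2.3}$. Put $\eta:=\omega^{m}$; since $\omega$ has multiplicative order $q^{2}-1$ and $m=\frac{q^{2}-1}{2s+1}$, the element $\eta$ is a primitive $(2s+1)$-th root of unity in $\mathbb{F}_{q^{2}}$, and $\omega^{\ell\mu m}=\eta^{\ell\mu}$. Hence $(\ref{eq4})$ is equivalent to $A\mathbf{u}^{T}=\mathbf{0}^{T}$, where $A$ is the $(2t+1)\times(2t+2)$ matrix whose first row is $(1,1,\dots,1)$, whose first column is $(1,0,\dots,0)^{T}$, and whose entry in the row indexed by $\mu\in\{s-t+1,\dots,s+t\}$ and the column indexed by $\ell\in\{1,\dots,2t+1\}$ equals $\eta^{\ell\mu}$. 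It then suffices to check the two hypotheses of Lemma $\ref{lem2.3}$ for this $A$. (For the degenerate case $t=0$ the second block of $(\ref{eq4})$ is empty, $A=(1\ \ 1)$, and both hypotheses hold trivially; so assume $t\ge 1$ below.)

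The hypothesis that $A^{(q)}$ is row equivalent to $A$ is the easy one. Since $(2s+1)\mid(q+1)$ we have $\eta^{q}=\eta^{-1}$, so the $q$-th power of the $\mu$-th row $(\eta^{\ell\mu})_{\ell}$ is $(\eta^{-\ell\mu})_{\ell}=(\eta^{\ell(2s+1-\mu)})_{\ell}$, namely the $(2s+1-\mu)$-th row of $A$, while the all-ones row is fixed. Because $\mu\mapsto 2s+1-\mu$ is an involution of $\{s-t+1,\dots,s+t\}$, the matrix $A^{(q)}$ is obtained from $A$ by a permutation of its rows, so it is row equivalent to $A$.

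The substantive hypothesis is that every $(2t+1)\times(2t+1)$ submatrix of $A$, that is, $A$ with one of its $2t+2$ columns removed, is invertible. If the removed column is indexed by some $\ell_{0}\in\{1,\dots,2t+1\}$, expand the resulting determinant along the first column $(1,0,\dots,0)^{T}$: up to sign it equals $\det(\eta^{\ell\mu})$ with $\mu$ ranging over $\{s-t+1,\dots,s+t\}$ and $\ell$ over $\{1,\dots,2t+1\}\setminus\{\ell_{0}\}$, and factoring $\eta^{\ell(s-t+1)}$ out of each column displays this as a nonzero scalar times a genuine Vandermonde determinant in the distinct nonzero elements $\{\eta^{\ell}:\ell\neq\ell_{0}\}$ (distinct because $\eta$ has order $2s+1>2t+1$), hence nonzero. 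If instead the first column is removed, the remaining matrix $M$ has columns $(1,\eta^{\ell(s-t+1)},\dots,\eta^{\ell(s+t)})^{T}$ for $\ell=1,\dots,2t+1$; this is a \emph{generalized} Vandermonde matrix whose exponent set $\{0\}\cup\{s-t+1,\dots,s+t\}$ is not an interval, so $\det M=\pm\, s_{\lambda}(\eta,\eta^{2},\dots,\eta^{2t+1})\prod_{1\le i<j\le 2t+1}(\eta^{j}-\eta^{i})$, where $\lambda$ is the rectangular partition $((s-t)^{2t})$. The Vandermonde product is nonzero. For the Schur polynomial one uses that $\eta,\eta^{2},\dots,\eta^{2t+1}$ is a geometric progression: by the hook-content formula for the principal specialization, $s_{\lambda}(\eta,\dots,\eta^{2t+1})$ equals, up to a nonzero power of $\eta$, a product of factors $(1-\eta^{a})/(1-\eta^{b})$ in which every numerator exponent $a$ lies in $\{2,\dots,s+t\}$ and every denominator exponent $b$ lies in $\{1,\dots,s+t-1\}$. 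As $0\le t\le s-1$ forces $s+t\le 2s-1<2s+1=\mathrm{ord}(\eta)$, no such $a$ or $b$ is a multiple of $2s+1$, so every factor is a nonzero element of $\mathbb{F}_{q^{2}}$ and $\det M\neq 0$. With both hypotheses verified, Lemma $\ref{lem2.3}$ produces a solution $\mathbf{u}=(u_{0},u_{1},\dots,u_{r})\in(\mathbb{F}_{q}^{*})^{r+1}$, which is the assertion.

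The main obstacle is the deleted-first-column case: there the relevant minor is a genuine generalized Vandermonde determinant over $\mathbb{F}_{q^{2}}$ — and such determinants can vanish — so the content of the argument is to identify it with the Schur polynomial of the rectangular shape $((s-t)^{2t})$ at a geometric progression and then to confirm, using exactly the ranges of exponents guaranteed by $0\le t\le s-1$, that none of the cyclotomic factors $1-\eta^{c}$ in its product formula collapses to $0$.
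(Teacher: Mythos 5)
Your proof is correct, and most of it coincides with the paper's: the same matrix $A$, the same row-permutation argument that $A^{(q)}$ is row equivalent to $A$ (via $\eta^{q}=\eta^{-1}$ and the involution $\mu\mapsto 2s+1-\mu$ of $\{s-t+1,\dots,s+t\}$), the same column-factoring Vandermonde argument for the minors missing a column $\ell_{0}\in\{1,\dots,2t+1\}$, and the same appeal to Lemma~\ref{lem2.3}. Where you genuinely diverge is the minor missing the $u_{0}$-column: you treat it as a generalized Vandermonde determinant, identify it via the bialternant formula with $s_{((s-t)^{2t})}(\eta,\dots,\eta^{2t+1})\prod_{i<j}(\eta^{j}-\eta^{i})$, and rule out vanishing through the hook-content principal specialization; your bookkeeping is right (numerator exponents in $\{2,\dots,s+t\}$, hook lengths in $\{1,\dots,s+t-1\}$, and $s+t\le 2s-1<\operatorname{ord}(\eta)$), with the mild caveat that the specialization in characteristic $p$ should be justified by applying the identity in its denominator-cleared form over $\mathbb{Z}[q]$, which your exponent bounds permit. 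The paper dispatches this same minor much more cheaply: since the entry in row $\mu$ and column $\ell$ is $\eta^{\ell\mu}=(\eta^{\mu})^{\ell}$ and the column exponents $\ell=1,\dots,2t+1$ are \emph{consecutive}, factoring $\eta^{\mu}$ out of the row indexed by $\mu$ (with the all-ones row read as $\mu=0$) exhibits the minor as an ordinary Vandermonde in the $2t+1$ distinct nodes $1,\eta^{s-t+1},\dots,\eta^{s+t}$, so its nonvanishing is immediate. Thus the obstacle you single out — that generalized Vandermonde determinants can vanish — dissolves once one notices the matrix is Vandermonde in the other direction; your Schur-polynomial route is valid and more general (it would also handle exponent sets that are non-consecutive in both directions), but it buys that generality at the cost of machinery the statement does not need.
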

\begin{proof}
 Let $\alpha=\omega^{m}$ be a primitive $(2s+1)$-th root of unity and let $a= s-t+1$. It is easy to verify that $\alpha^{a+\nu} \neq  \alpha^{a+\nu'} \neq 1$ for any $0 \leq \nu \neq \nu' \leq  r-2$. Let
\[A=\left(
      \begin{array}{ccccc}
       1& 1 & 1 & \cdots & 1 \\
       0 & \alpha^{a} & \alpha^{2a} & \cdots & \alpha^{ra} \\
       0 & \alpha^{a+1} & \alpha^{2(a+1)} & \cdots & \alpha^{r(a+1)} \\
       \vdots & \vdots & \vdots & \ddots & \vdots \\
       0 & \alpha^{a+r-2} & \alpha^{2(a+r-2)} & \cdots & \alpha^{r(a+r-2)}\\
      \end{array}
    \right)\]
be an $r \times (r+1)$ matrix over $\mathbb{F}_{q^{2}}$. Then the system (\ref{eq4}) of equations is equivalent to the following equation
\begin{equation*}
  A\textbf{u}^{T}=\textbf{0}^{T}.
\end{equation*}
For any $1 \leq i \leq r+1$, let $A_{i}$ be the $r\times r$ matrix obtained from $A$ by deleting the $i$-th column.
Then

\[\det(A_{1})=(\alpha^{(r-1)a+\frac{(r-1)(r-2)}{2}})\det (B_{1}) \neq 0,\]
where \[B_{1}=\left(
      \begin{array}{ccccc}
        1 & 1 & 1 & \cdots & 1 \\
        1 & \alpha^{a} & \alpha^{2a} & \cdots & \alpha^{(r-1)a} \\
        \vdots & \vdots & \vdots &\ddots & \vdots \\
        1 & \alpha^{a+r-2} & \alpha^{2(a+r-2)} & \cdots & \alpha^{(r-1)(a+r-2)} \\
      \end{array}
    \right),\]
and for $2 \leq i \leq r+1$
\[ \det(A_{i}) = b_{i} \det (B_{i})\neq 0,\]
where $b_{i}=\alpha^{a}\cdots\alpha^{(i-1)a}\alpha^{(i+1)a}\cdots\alpha^{ra}$ and
\[B_{i}=\begin{pmatrix}

        \begin{smallmatrix}

            1 & \cdots & 1 & 1 & \cdots & 1 \\
            \alpha & \cdots & \alpha^{i-1} & \alpha^{i+1} & \cdots & \alpha^{r} \\
             \alpha^{2} & \cdots & \alpha^{2(i-1)} & \alpha^{2(i+1)} & \cdots & \alpha^{2r} \\
             \vdots & \ddots & \vdots & \vdots & \ddots & \vdots \\
           \alpha^{r-2} & \cdots & \alpha^{(i-1)(r-2)} & \alpha^{(i+1)(r-2)} & \cdots & \alpha^{r(r-2)}

        \end{smallmatrix}

    \end{pmatrix}.\]
Hence any $r$ columns of $A$ are linearly independent.
On the other hand, since $(2s+1) \mid (q+1)$, we have
\[\alpha^{i(a+j)q}=\alpha^{-i(s-t+1+j)}=\alpha^{i(s+t-j)}=\alpha^{i(a+r-2-j)},\]
for any $1 \leq i \leq r$ and $0 \leq j \leq r-2$.
Thus $A$ is row equivalent to $A^{(q)}$.  The conclusion then follows from Lemma \ref{lem2.3}.

\end{proof}

Let $\omega$ be a primitive element of $\mathbb{F}_{q^{2}}$ and $\theta=\omega^{2s+1}$ be a primitive $m$-th root of unity ($m=\frac{q^{2}-1}{2s+1}$). It is easy to verify that
\[\omega^{i_{1}}\theta^{j_{1}} \neq \omega^{i_{2}}\theta^{j_{2}} \]
for any $1 \leq i_{1} \neq i_{2} \leq r$ and $0 \leq j_{1} \neq j_{2} \leq m-1.$
Put
\[\textbf{a}=(0, \omega, \omega\theta, \ldots, \omega\theta^{m-1}, \ldots , \omega^{r}, \omega^{r}\theta, \ldots, \omega^{r}\theta^{m-1}) \in \mathbb{F}_{q^{2}}^{n}.\]
Set
\[\textbf{v}=(v_{0},\underbrace{ v_{1},\ldots,v_{1}}_{m\textnormal{ times}},\ldots,\underbrace{v_{r},\ldots,v_{r}}_{m\textnormal{ times}}),\]
where $v_{0}, v_{1},\ldots, v_{r} \in \mathbb{F}_{q^{2}}^{*}$.
Similar to the discussion before Theorem \ref{thm3.2}, we have
\begin{equation}\label{eq5}
  \langle\textbf{a}^{0}, \textbf{v}^{q+1}\rangle = v_{0}^{q+1}+ (v_{1}^{q+1}+\cdots+v_{r}^{q+1})m.
\end{equation}
For any $(i,j)\neq (0,0)$,
\[\langle\textbf{a}^{qi+j}, \textbf{v}^{q+1}\rangle=0, \textnormal{ when }m \nmid (qi+j),\]
and
\begin{equation}\label{eq6}
  \langle\textbf{a}^{qi+j}, \textbf{v}^{q+1}\rangle=m\sum\limits_{\ell=1}^{r}\omega^{\ell(qi+j)}v_{\ell}^{q+1},  \textnormal{ when }m \mid (qi+j).
\end{equation}

Now, we present our second construction as follows.
\begin{theorem}\label{thm4.3}
Let $q > 2$ be a prime power, $(2s+1) \mid (q+1)$ and $1 \leq r < 2s+1$. Put $n=1+r\frac{q^{2}-1}{2s+1}$.

\begin{description}
  \item[(i)] For any $1 \leq k \leq (s+1)\frac{q+1}{2s+1}-1$, there exists an $[[n, n-2k, k+1]]_{q}$-quantum MDS code.
  \item[(ii)] If $r=2t+1$, where $0 \leq t \leq s-1$, then for any $1 \leq k \leq (s+1+t)\frac{q+1}{2s+1}-1$, there exists an $[[n, n-2k, k+1]]_{q}$-quantum MDS code.
\end{description}
\end{theorem}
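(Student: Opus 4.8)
The plan is to reduce everything, via Lemma~\ref{lem2.1} and Corollary~\ref{cor1.3}, to a single good choice of the scaling vector $\textbf{v}$. Concretely, I keep the evaluation vector $\textbf{a}$ and the block-constant vector $\textbf{v}=(v_{0},v_{1},\ldots,v_{1},\ldots,v_{r},\ldots,v_{r})$ exactly as fixed just before the theorem, and I try to pick $v_{0},v_{1},\ldots,v_{r}\in\mathbb{F}_{q^{2}}^{*}$ so that $\langle\textbf{a}^{qi+j},\textbf{v}^{q+1}\rangle=0$ for all $0\leq i,j\leq k-1$; by Lemma~\ref{lem2.1} this makes $GRS_{k}(\textbf{a},\textbf{v})$ a Hermitian self-orthogonal $[n,k,n-k+1]_{q^{2}}$-MDS code, and Corollary~\ref{cor1.3} then produces the claimed $[[n,n-2k,k+1]]_{q}$-quantum MDS code. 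Writing $u_{\ell}=v_{\ell}^{q+1}$, the norm map $x\mapsto x^{q+1}$ from $\mathbb{F}_{q^{2}}^{*}$ onto $\mathbb{F}_{q}^{*}$ lets me freely prescribe $u_{0},\ldots,u_{r}\in\mathbb{F}_{q}^{*}$, and since $m$ divides $q^{2}-1$ it is coprime to the characteristic of $\mathbb{F}_{q}$, so $m\neq 0$ in $\mathbb{F}_{q}$. By Eqs.~(\ref{eq5}) and (\ref{eq6}), the conditions $\langle\textbf{a}^{qi+j},\textbf{v}^{q+1}\rangle=0$ for $0\leq i,j\leq k-1$ reduce to $v_{0}^{q+1}+m(u_{1}+\cdots+u_{r})=0$ (from $(i,j)=(0,0)$) together with $\sum_{\ell=1}^{r}\omega^{\ell\mu m}u_{\ell}=0$ for each $\mu$ such that $\mu m$ occurs as some $qi+j$ with $m\mid(qi+j)$; since Eq.~(\ref{eq6}) depends only on the value $qi+j$, coinciding values impose the same equation and nothing is over-counted.

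For part (i) I invoke Lemma~\ref{lem2.4}(i) with $t=0$: for $0\leq i,j\leq k-1$ with $1\leq k\leq(s+1)\frac{q+1}{2s+1}-1$, the relation $m\mid(qi+j)$ holds only for $(i,j)=(0,0)$, so the only surviving equation is $v_{0}^{q+1}+m(u_{1}+\cdots+u_{r})=0$. I then call on Lemma~\ref{lem4.1} (legitimate since $q>2$) to get $u_{0},u_{1},\ldots,u_{r}\in\mathbb{F}_{q}^{*}$ with $u_{0}+u_{1}+\cdots+u_{r}=0$, choose $v_{0}\in\mathbb{F}_{q^{2}}^{*}$ with $v_{0}^{q+1}=mu_{0}\in\mathbb{F}_{q}^{*}$ and $v_{\ell}\in\mathbb{F}_{q^{2}}^{*}$ with $v_{\ell}^{q+1}=u_{\ell}$ for $1\leq\ell\leq r$, and check via Eq.~(\ref{eq5}) that $\langle\textbf{a}^{0},\textbf{v}^{q+1}\rangle=m(u_{0}+u_{1}+\cdots+u_{r})=0$. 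The conclusion follows from Lemma~\ref{lem2.1} and Corollary~\ref{cor1.3} as in the first paragraph.

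For part (ii), where $r=2t+1$ with $0\leq t\leq s-1$, I again apply Lemma~\ref{lem2.4}(i): now for $0\leq i,j\leq k-1$ with $1\leq k\leq(s+1+t)\frac{q+1}{2s+1}-1$ the nonzero values $qi+j$ divisible by $m$ are exactly $\mu m$ with $\mu\in\{s-t+1,\ldots,s+t\}$. After setting $v_{0}^{q+1}=mu_{0}$, the $(i,j)=(0,0)$ equation becomes $m\sum_{\ell=0}^{r}u_{\ell}=0$, i.e.\ $\sum_{\ell=0}^{r}u_{\ell}=0$ (using $m\neq 0$), and the remaining equations are $\sum_{\ell=1}^{r}\omega^{\ell\mu m}u_{\ell}=0$ for $\mu=s-t+1,\ldots,s+t$; this is precisely the homogeneous system~(\ref{eq4}) of Lemma~\ref{lem4.2}. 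Lemma~\ref{lem4.2} supplies a solution $(u_{0},u_{1},\ldots,u_{r})\in(\mathbb{F}_{q}^{*})^{r+1}$; lifting it back through the norm map yields $\textbf{v}\in(\mathbb{F}_{q^{2}}^{*})^{n}$ with $\langle\textbf{a}^{qi+j},\textbf{v}^{q+1}\rangle=0$ for all $0\leq i,j\leq k-1$, and Lemma~\ref{lem2.1} together with Corollary~\ref{cor1.3} finishes the argument.

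I expect no genuine obstacle here, since the two substantive ingredients are already in hand: Lemma~\ref{lem2.4}(i), which pins down exactly which exponents $qi+j$ must be handled, and Lemma~\ref{lem4.2}, whose proof is the Vandermonde-type computation guaranteeing a solution with all coordinates in $\mathbb{F}_{q}^{*}$. The rest is bookkeeping: checking that the $n$ coordinates of $\textbf{a}$ are pairwise distinct (true because $r<2s+1$ makes the cosets $\omega^{i}\langle\theta\rangle$, $1\leq i\leq r$, distinct, as recorded before the theorem), that $m\neq 0$ in $\mathbb{F}_{q}$ so $mu_{0}$ is a unit, and that every element of $\mathbb{F}_{q}^{*}$ is a $(q+1)$-th power in $\mathbb{F}_{q^{2}}^{*}$. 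The only point to watch is the precise matching between the index set output by Lemma~\ref{lem2.4}(i) and the equations of Lemma~\ref{lem4.2} — in particular, making sure the $(i,j)=(0,0)$ constraint is absorbed by the extra coordinate $u_{0}$ (equivalently $v_{0}$) rather than conflated with one of the $\mu$-equations.
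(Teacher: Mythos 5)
Your proposal is correct and follows essentially the same route as the paper's own proof: for part (i) you combine Lemma \ref{lem4.1} with Lemma \ref{lem2.4}(i) at $t=0$, for part (ii) you combine Lemma \ref{lem4.2} with Lemma \ref{lem2.4}(i), and in both cases you lift the solution $(u_{0},\ldots,u_{r})$ through the norm map with $v_{0}^{q+1}=mu_{0}$ and conclude via Lemma \ref{lem2.1} and Corollary \ref{cor1.3}, exactly as in the paper. Your additional bookkeeping remarks (that $m\neq 0$ in $\mathbb{F}_{q}$, the surjectivity of the norm map, and the distinctness of the entries of $\textbf{a}$) are implicit in the paper's argument and do not change the approach.
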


\begin{proof}
Keep the notations as above.
\vskip 1mm
(i): Suppose $1 \leq k \leq (s+1)\frac{q+1}{2s+1}-1$.
By Lemma \ref{lem4.1}, there exist $u_{0}, u_{1}, \ldots, u_{r} \in \mathbb{F}_{q}^{*}$ such that
 \[\sum_{i=0}^{r}u_{i}=0.\]
For $i=1, 2, \ldots, r$, let $v_{i} \in \mathbb{F}_{q^{2}}^{*}$ such that $v_{i}^{q+1}=u_{i}$ and let $v_{0} \in \mathbb{F}_{q^{2}}^{*}$ such that $v_{0}^{q+1}=u_{0}m$.
Then by Eq. (\ref{eq5}),
\[\langle\textbf{a}^{0}, \textbf{v}^{q+1}\rangle = u_{0}m+(u_{1}+\cdots+u_{r})m=0.\]
Taking $t=0$ in Lemma \ref{lem2.4} (i), we obtain that $m \mid (qi+j)$ if and only if $(i,j)=(0,0)$. Thus from the above discussion,
\[\langle \textbf{a}^{qi+j}, \textbf{v}^{q+1} \rangle =0,\textnormal{ for all } 0 \leq i, j \leq k-1.\]
By Lemma \ref{lem2.1}, $GRS_{k}(\textbf{a}, \textbf{v})$ is a Hermitian self-orthogonal MDS code with parameters $[n, k, n-k+1]$. Part (i) then follows from Corollary \ref{cor1.3}.
\vskip 2mm
(ii): Suppose $r=2t+1$, where $0 \leq t \leq s-1$ and $1 \leq k \leq (s+t+1)\frac{q+1}{2s+1}-1$. By Lemma \ref{lem4.2}, there exist $u_{0}, u_{1}, \ldots, u_{r} \in \mathbb{F}_{q}^{*}$ which satisfy the system (\ref{eq4}) of equations. For $i=1, 2, \ldots, r$, let $v_{i} \in \mathbb{F}_{q^{2}}^{*}$ such that $v_{i}^{q+1}=u_{i}$ and let $v_{0} \in \mathbb{F}_{q^{2}}^{*}$ such that $v_{0}^{q+1}=u_{0}m$.
Then by Eq. (\ref{eq6}),
\[\langle\textbf{a}^{0}, \textbf{v}^{q+1}\rangle = u_{0}m+(u_{1}+\cdots+u_{r})m=0.\]
 By Lemma \ref{lem2.4} (i), $m \mid (qi+j)$  if and only if $qi+j \in \{0, (s-t+1)m, (s-t+2)m, \ldots, (s+t)m \}$. Thus by Eq. (\ref{eq6}), when $qi+j=\mu m$ ($s-t+1 \leq \mu \leq s+t$),  we have
 \[\langle\textbf{a}^{qi+j}, \textbf{v}^{q+1}\rangle= m\sum_{\ell=1}^{r}\omega^{\ell\mu m}v_{\ell}^{q+1}=m\sum_{\ell=1}^{r}\omega^{\ell\mu m}u_{\ell}=0.\]
Hence
\[\langle\textbf{a}^{qi+j}, \textbf{v}^{q+1}\rangle=0,\]
for all $ 0 \leq i, j \leq k-1.$ By Lemma \ref{lem2.1}, $GRS_{k}(\textbf{a}, \textbf{v})$ is a Hermitian self-orthogonal MDS code with parameters $[n, k, n-k+1]$. Part (ii) then also follows from Corollary \ref{cor1.3}.
\\The proof of this theorem is completed.
\end{proof}
\begin{remark}\label{rem4.4}
\begin{description}
  \item[i)] The minimum distance of the quantum codes constructed in Theorem \ref{thm4.3} can be larger than $\frac{q}{2}+1$.
  \item[ii)]  Part (i) of Theorem \ref{thm4.3} extends the result of \cite[Theorem 3.2 (i)]{JKW17}  where a stricter condition $\textnormal{gcd}(r,q) = 1$  is required.
  \item[iii)] When $r=2t+1$ and $1 \leq t \leq s-1$, the quantum codes from Part (ii) of Theorem \ref{thm4.3} have larger minimum distance than that of Part (i).
\end{description}
\end{remark}

Shi \emph{et al.} \cite[Theorem 4.2]{SYZ17} constructed a family of quantum MDS codes of length $n=r\frac{q^{2}-1}{2s+1}$, where $r=2t+2$ is even. For $r=2t+1$ odd, applying the propagation rule (see Lemma \ref{lem1.4}) for Theorem \ref{thm4.3} (ii), we can immediately obtain the following result.
\begin{corollary}\label{cor4.5}
Let $q > 2$ be a prime power, $(2s+1) \mid (q+1)$ and $0 \leq t \leq s-1$. Put $n=(2t+1)\frac{q^{2}-1}{2s+1}$. Then for any $1 \leq k \leq (s+1+t)\frac{q+1}{2s+1}-2$, there exists an $[[n, n-2k, k+1]]_{q}$-quantum MDS code.
\end{corollary}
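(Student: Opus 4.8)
The statement to prove is Corollary \ref{cor4.5}: given $q>2$ a prime power, $(2s+1)\mid(q+1)$, and $0\le t\le s-1$, put $n=(2t+1)\frac{q^2-1}{2s+1}$; then for any $1\le k\le (s+1+t)\frac{q+1}{2s+1}-2$ there exists an $[[n,n-2k,k+1]]_q$-quantum MDS code.

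The plan is to derive this immediately from Theorem \ref{thm4.3}(ii) via the propagation rule of Lemma \ref{lem1.4}. First I would observe that the hypotheses of Theorem \ref{thm4.3}(ii) are exactly those here with $r=2t+1$, so that for $n'=1+(2t+1)\frac{q^2-1}{2s+1}=n+1$ and any $1\le k'\le (s+1+t)\frac{q+1}{2s+1}-1$ there is an $[[n', n'-2k', k'+1]]_q$-quantum MDS code. (One should note $r=2t+1$ does lie in the range $1\le r<2s+1$ required by the theorem: since $0\le t\le s-1$, we have $1\le 2t+1\le 2s-1<2s+1$.) Such a code is MDS, so it achieves the quantum Singleton bound and can be written as $[[n', n'-2d'+2, d']]_q$ with $d'=k'+1$; Lemma \ref{lem1.4} then yields an $[[n'-1, n'-2d'+3, d'-1]]_q$-quantum MDS code, i.e. an $[[n, n-2(k'-1), (k'-1)+1]]_q$-quantum MDS code since $n'-1=n$.

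Setting $k=k'-1$, the range $1\le k'\le (s+1+t)\frac{q+1}{2s+1}-1$ becomes $0\le k\le (s+1+t)\frac{q+1}{2s+1}-2$; discarding the trivial $k=0$ case (which gives $d=1$) leaves exactly $1\le k\le (s+1+t)\frac{q+1}{2s+1}-2$, matching the claimed range, and the resulting code is $[[n, n-2k, k+1]]_q$ as required. There is essentially no obstacle here: the only things to check are the bookkeeping on the parameter ranges (that $r=2t+1$ is admissible and that shifting $k'\mapsto k'-1$ produces precisely the stated bound on $k$) and that the propagation rule preserves the MDS property, which is built into the statement of Lemma \ref{lem1.4}. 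I would therefore present this as a short two- or three-line deduction rather than a substantial argument, since all the real work is already done in Theorem \ref{thm4.3}(ii).
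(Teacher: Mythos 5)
Your proposal is correct and follows exactly the paper's route: the paper obtains Corollary \ref{cor4.5} by applying the propagation rule of Lemma \ref{lem1.4} to Theorem \ref{thm4.3}(ii), and your parameter bookkeeping (admissibility of $r=2t+1$ and the shift $k=k'-1$) is accurate. Nothing is missing.
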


\begin{remark}\label{rem4.6}
Jin \emph{et al.} \cite[Theorem 3.2 (ii)]{JKW17} constructed a family of $q$-ary quantum MDS codes with parameters $[[r\frac{q^{2}-1}{2s+1}, r\frac{q^{2}-1}{2s+1}-2k, k+1]]$, for any $k \leq (s+1)\frac{q+1}{2s+1}-1$, where $(2s+1) \mid (q+1)$ and $\gcd(r, q) >1$. If $t \geq 1$ and $2s+1 \neq q+1$, then $(s+1+t)\frac{q+1}{2s+1}-1 \geq (s+1)\frac{q+1}{2s+1}$ and hence the quantum codes of Corollary \ref{cor4.5} have larger minimum distance.
\end{remark}

\begin{example}In this example, we give some new quantum MDS codes from Theorem \ref{thm4.3}.
\begin{description}
  \item[(i)] Let $(r, s)=(2,2)$ in Theorem \ref{thm4.3} (i). Then, when $5 \mid (q+1)$, there exists a $[[1+\frac{2}{5}(q^{2}-1), 1+\frac{2}{5}(q^{2}-1)-2k, k+1]]_{q}$ quantum MDS code for any $1 \leq k \leq \frac{3}{5}(q+1)-1$;
  \item[(ii)] Let $(r,s)=(5, 3)$ in Theorem \ref{thm4.3} (ii). Then, when $7 \mid (q+1)$, there exists a $[[1+\frac{5}{7}(q^{2}-1), 1+\frac{5}{7}(q^{2}-1)-2k, k+1]]_{q}$ quantum MDS code for any $1 \leq k \leq \frac{6}{7}(q+1)-1$.
\end{description}
\end{example}

\section{Quantum MDS codes of length $n=1+r\frac{q^{2}-1}{2s}$, where $2s \mid (q+1)$}
In this section, we construct  quantum MDS codes of length $n=1+r\frac{q^{2}-1}{2s}$, where $1 \leq r \leq 2s$ and $2s \mid (q+1)$. If $r=2s$, then $n=q^{2}$; If $r=s=1$, then $n=\frac{q^{2}+1}{2}$. The $q$-ary quantum MDS codes of lengths $q^{2}$ and $\frac{q^{2}+1}{2}$ have been already constructed in \cite{GBR04} and \cite{KZ12}, respectively. To simplify the following discussion, we assume that $r < 2s$ and $s > 1$. In this section, we denote $m:=\frac{q^{2}-1}{2s}$. We first provide two technical lemmas as follows.
\begin{lemma}\label{lem5.1}
Suppose that $q$ is odd and $r \geq 2$. Then the following system of equations
\begin{equation}\label{eq7}
\left\{
\begin{aligned}
 \sum_{k=0}^{r}u_{k} & =0 \\ \sum_{i=1}^{r}(-1)^{i}u_{i} & =0
\end{aligned}\right.
\end{equation}
has a solution $\textbf{u}\triangleq(u_{0}, u_{1}, \ldots, u_{r}) \in (\mathbb{F}_{q}^{*})^{r+1}$.
\end{lemma}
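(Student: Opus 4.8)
The plan is to reduce Lemma \ref{lem5.1} to an application of Lemma \ref{lem2.3}, exactly as was done for Lemma \ref{lem3.1} and Lemma \ref{lem4.2}. First I would rewrite the system \eqref{eq7} as a matrix equation $A\textbf{u}^{T}=\textbf{0}^{T}$, where
\[
A=\left(
\begin{array}{cccccc}
1 & 1 & 1 & 1 & \cdots & 1 \\
0 & -1 & 1 & -1 & \cdots & (-1)^{r}
\end{array}
\right)
\]
is the $2\times(r+1)$ matrix whose first row is all ones and whose second row is $(0,-1,1,-1,\dots,(-1)^{r})$. Since $r\geq 2$, this matrix has $r+1\geq 3$ columns, so I am in the regime where I want to invoke Lemma \ref{lem2.3} with the role of "$r$" in that lemma played by $2$; note the matrix must then be $2\times 3$ for a direct application, so in fact the cleanest route is to argue directly that $A\textbf{u}^{T}=\textbf{0}^{T}$ has a solution with all coordinates nonzero, rather than forcing the shape of Lemma \ref{lem2.3}. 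Concretely, the solution space of $A\textbf{u}^{T}=\textbf{0}^{T}$ has dimension $r-1\geq 1$ over $\mathbb{F}_{q}$, and I must show it contains a vector with no zero coordinate.

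The two verifications needed are: (1) the two rows of $A$ are linearly independent, which is immediate since the first entry of row one is $1$ while the first entry of row two is $0$ (and row two is nonzero because $r\geq 1$), so $\mathrm{rank}(A)=2$; and (2) $A^{(q)}=A$, which holds since all entries of $A$ lie in $\{0,\pm 1\}\subseteq\mathbb{F}_{q}$ — here we use that $q$ is odd so that $-1\neq 1$, which is also what makes the second equation nontrivial. To get a solution with all nonzero coordinates I would mimic the end of the proof of Lemma \ref{lem2.3}: the code $C$ over $\mathbb{F}_{q}$ (or $\mathbb{F}_{q^2}$) generated by $A$ is an $[r+1,2,r]$-code provided every $r$ columns of $A$... but wait, here $A$ has only two rows, so that MDS argument does not apply verbatim since a $2\times(r+1)$ matrix generates a dimension-$2$ code, not a dimension-$r$ one. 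So the correct framing is the dual: $\textbf{u}$ is a codeword of the code $C^{\perp}$ where $C$ is generated by $A$, and I want a codeword of $C^{\perp}$ of full weight $r+1$.

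Because of this shape mismatch, I expect the main obstacle to be producing the nonzero-coordinate solution cleanly, and I think the honest approach is an explicit construction exploiting the special structure of $A$. Write the two equations as $\sum_{k=0}^{r}u_{k}=0$ and $\sum_{i=1}^{r}(-1)^{i}u_{i}=0$; adding and subtracting, these are equivalent to $u_{0}+2\sum_{i\ \mathrm{even},\,1\le i\le r}u_{i}=0$ and $u_{0}+2\sum_{i\ \mathrm{odd},\,1\le i\le r}u_{i}=0$ (using $q$ odd so $2$ is invertible). Now I would pick $u_{0}\in\mathbb{F}_{q}^{*}$ arbitrarily, then choose the odd-indexed $u_{i}$ to be nonzero with prescribed sum $-u_{0}/2$ and the even-indexed $u_{i}$ to be nonzero with prescribed sum $-u_{0}/2$; this is possible for each block of size $\ge 1$ by the same argument as Lemma \ref{lem4.1} (a set of $\ge 1$ nonzero elements of $\mathbb{F}_{q}$ with any prescribed sum exists when $q>2$, which holds since $q$ is odd and $r\ge 2$ forces... actually one only needs $q\ge 3$, guaranteed by $q$ odd). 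The only edge case is when one of the two index blocks is empty, i.e. $r=1$, which is excluded by hypothesis $r\ge 2$; for $r\ge 2$ both blocks $\{1\le i\le r:\ i\ \mathrm{odd}\}$ and $\{1\le i\le r:\ i\ \mathrm{even}\}$ are nonempty, so the construction goes through and yields $\textbf{u}\in(\mathbb{F}_{q}^{*})^{r+1}$, completing the proof.
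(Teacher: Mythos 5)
Your final argument is correct and is essentially the paper's own proof: the system is rewritten (using that $2$ is invertible since $q$ is odd) as $u_{0}+2\sum_{i\,\mathrm{odd}}u_{i}=0$ and $u_{0}+2\sum_{j\,\mathrm{even}}u_{j}=0$, and then the Lemma \ref{lem4.1}-type argument supplies nonzero entries in each (nonempty, since $r\geq 2$) block. Your initial detour through Lemma \ref{lem2.3} is rightly abandoned because of the shape mismatch, and its abandonment does not affect the validity of the proof you give.
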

\begin{proof}
Note that the system (\ref{eq7}) of equations is equivalent to
\[u_{0}+\sum_{i=1, i\textnormal{ odd}}^{r}(2u_{i})=u_{0}+\sum_{j=2, j \textnormal{ even}}^{r}(2u_{j})=0.\]
The conclusion then follows from Lemma \ref{lem4.1}.
\end{proof}

According to Lemma \ref{lem2.3}, we can prove the following lemma similarly as Lemma \ref{lem4.2}. Hence, we omit the details of the proof.
\begin{lemma}\label{lem5.2}
Suppose $2s \mid (q+1)$. Let $\omega$ be a primitive element of $\mathbb{F}_{q^{2}}$ and $r=2t+2$, where $0 \leq t \leq s-2$. Then the following system of equations
\begin{equation*}\label{7}
 \left\{
\begin{aligned}
  \sum_{\ell=0}^{r}u_{\ell} &=0 \\
  \sum_{\ell=1}^{r}\omega^{\ell \mu m}u_{\ell} &=0,~~\mu=s-t, s-t+1, \ldots, s+t,
\end{aligned}\right.
\end{equation*}
has a solution $\textbf{u}\triangleq(u_{0}, u_{1}, \ldots, u_{r}) \in (\mathbb{F}_{q}^{*})^{r+1}.$
\end{lemma}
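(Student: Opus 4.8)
The plan is to mimic the proof of Lemma \ref{lem4.2} almost verbatim, replacing the modulus $2s+1$ by $2s$ throughout. Set $\alpha=\omega^{m}$, which is now a primitive $2s$-th root of unity, and put $a=s-t$. The system of equations is equivalent to $A\textbf{u}^{T}=\textbf{0}^{T}$, where
\begin{equation*}
A=\left(
\begin{array}{ccccc}
1 & 1 & 1 & \cdots & 1 \\
0 & \alpha^{a} & \alpha^{2a} & \cdots & \alpha^{ra} \\
0 & \alpha^{a+1} & \alpha^{2(a+1)} & \cdots & \alpha^{r(a+1)} \\
\vdots & \vdots & \vdots & \ddots & \vdots \\
0 & \alpha^{a+r-2} & \alpha^{2(a+r-2)} & \cdots & \alpha^{r(a+r-2)}
\end{array}
\right)
\end{equation*}
is an $r\times(r+1)$ matrix over $\mathbb{F}_{q^{2}}$ (here $r=2t+2$, so there are exactly $r-1$ rows below the all-ones row, matching the exponents $a,a+1,\ldots,a+r-2$). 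By Lemma \ref{lem2.3}, it suffices to verify the two properties: (1) any $r$ columns of $A$ are linearly independent, and (2) $A^{(q)}$ is row equivalent to $A$.

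For Property (1), I would argue exactly as in Lemma \ref{lem4.2}: deleting the $i$-th column and factoring the common powers of $\alpha$ out of each column reduces each $r\times r$ minor to a Vandermonde determinant in the distinct nodes $\alpha^{j_{1}},\ldots,\alpha^{j_{r}}$ (with the convention that the first column, arising from the all-ones row sitting above a column of zeros, contributes the node $1=\alpha^{0}$). The key point to check is that the relevant powers of $\alpha$ are pairwise distinct and, where needed, distinct from $1$: since $\alpha$ has order $2s$ and the exponents involved, namely $1,2,\ldots,r$ together with $a,a+1,\ldots,a+r-2$, all lie strictly between $0$ and $2s$ because $r=2t+2\le 2s-2$ and $a+r-2=s+t\le 2s-2<2s$ while $a=s-t\ge 2$, these powers are indeed all distinct modulo $2s$ and none equals $1$. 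Hence every $r\times r$ minor is a nonzero scalar times a nonsingular Vandermonde determinant, giving Property (1).

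For Property (2), the computation is the one used in Lemma \ref{lem4.2}: since $2s\mid(q+1)$ we have $\alpha^{q}=\alpha^{-1}$, so for $1\le i\le r$ and $0\le j\le r-2$,
\[
\alpha^{i(a+j)q}=\alpha^{-i(s-t+j)}=\alpha^{i(s+t-j)}=\alpha^{i(a+r-2-j)},
\]
using $a=s-t$ and $a+r-2=s+t$. Thus applying the Frobenius $x\mapsto x^{q}$ to $A$ merely reverses the order of rows $2$ through $r$ (the all-ones row and the column of zeros in the first column are fixed), so $A^{(q)}$ is row equivalent to $A$. Lemma \ref{lem2.3} now yields a solution $\textbf{u}\in(\mathbb{F}_{q}^{*})^{r+1}$, completing the proof. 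The only genuinely delicate point — and the one I would be most careful about — is the bookkeeping in Property (1): confirming that all the exponents of $\alpha$ that appear really do stay in the range $(0,2s)$ so that the Vandermonde nodes are distinct, which is precisely where the hypotheses $0\le t\le s-2$ and $s>1$ are used.
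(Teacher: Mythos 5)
Your proposal is correct and is exactly the route the paper intends: the paper omits this proof, stating it follows from Lemma \ref{lem2.3} ``similarly as Lemma \ref{lem4.2}'', and you have carried out precisely that adaptation (with $a=s-t$, $\alpha=\omega^{m}$ of order $2s$), including the right exponent bookkeeping showing where $0\le t\le s-2$ is used. The only quibble is the parenthetical about the first column: the column $(1,0,\ldots,0)^{T}$ corresponds to evaluation at $0$ (or, for the minor $A_{1}$, the node $1$ arises from the all-ones row), not to the node $\alpha^{0}$, but your explicit distinctness checks on the exponents $1,\ldots,r$ and $a,\ldots,a+r-2$ are what the argument actually needs, so this does not affect correctness.
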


Let $\omega$ be a primitive element of $\mathbb{F}_{q^{2}}$ and $\theta=\omega^{2s}$ be a primitive $m$-th root of unity.
Put
\[\textbf{a}=(0, \omega, \omega\theta, \ldots, \omega\theta^{m-1}, \ldots , \omega^{r}, \omega^{r}\theta, \ldots, \omega^{r}\theta^{m-1}) \in \mathbb{F}_{q^{2}}^{n}.\]
Now, we give our third construction as follows.
\begin{theorem}\label{thm5.3}
Let $q$ be a prime power, $2s \mid (q+1)$ and $2 \leq r < 2s$. Put $n=1+r\frac{q^{2}-1}{2s}$.

\begin{description}
  \item[(i)] For any $1 \leq k \leq (s+1)\frac{q+1}{2s}-1$, there exists an $[[n, n-2k, k+1]]_{q}$-quantum MDS code.
  \item[(ii)] If $r=2t+2$, where $0 \leq t \leq s-2$, then for any $1 \leq k \leq (s+t+1)\frac{q+1}{2s}-1$, there exists an $[[n, n-2k, k+1]]_{q}$-quantum MDS code.
\end{description}
\end{theorem}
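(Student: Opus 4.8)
The plan is to follow the blueprint of the proof of Theorem \ref{thm4.3} almost verbatim, with the $(2s+1)$-th root of unity there replaced by the $2s$-th root of unity $\omega^{m}$, and with the extra identity $\omega^{sm}=-1$ doing the bookkeeping that links an alternating-sum condition to an evaluation condition.

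First I would settle the setup. Since $2s\mid(q+1)$ with $s>1$, in particular $2\mid(q+1)$, so $q$ is odd; since $r<2s$ and $\theta=\omega^{2s}$ has order $m=\frac{q^{2}-1}{2s}$, the cosets $\omega^{i}\langle\theta\rangle$ for $1\le i\le 2s$ are pairwise disjoint, so the $1+rm=n$ entries of $\textbf{a}$ are pairwise distinct. Exactly as in the discussion preceding Theorem \ref{thm4.3}, I would put
\[\textbf{v}=(v_{0},\underbrace{v_{1},\ldots,v_{1}}_{m\textnormal{ times}},\ldots,\underbrace{v_{r},\ldots,v_{r}}_{m\textnormal{ times}}),\qquad v_{0},v_{1},\ldots,v_{r}\in\mathbb{F}_{q^{2}}^{*},\]
and record that $\langle\textbf{a}^{0},\textbf{v}^{q+1}\rangle=v_{0}^{q+1}+m\sum_{\ell=1}^{r}v_{\ell}^{q+1}$, that $\langle\textbf{a}^{qi+j},\textbf{v}^{q+1}\rangle=0$ whenever $m\nmid(qi+j)$, and that $\langle\textbf{a}^{qi+j},\textbf{v}^{q+1}\rangle=m\sum_{\ell=1}^{r}\omega^{\ell(qi+j)}v_{\ell}^{q+1}$ whenever $m\mid(qi+j)$. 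The single new observation is that $\omega^{sm}=\omega^{(q^{2}-1)/2}=-1$, since $\omega$ is a primitive element of $\mathbb{F}_{q^{2}}$. Throughout I will use that the norm map $x\mapsto x^{q+1}$ from $\mathbb{F}_{q^{2}}^{*}$ onto $\mathbb{F}_{q}^{*}$ is surjective, so that any prescribed value in $\mathbb{F}_{q}^{*}$ for a coordinate $v_{i}^{q+1}$ can be realized.

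For part (i), I would apply Lemma \ref{lem5.1} (valid since $q$ is odd and $r\ge2$) to get $\textbf{u}=(u_{0},\ldots,u_{r})\in(\mathbb{F}_{q}^{*})^{r+1}$ solving system (\ref{eq7}), then choose $v_{i}$ with $v_{i}^{q+1}=u_{i}$ for $1\le i\le r$ and $v_{0}$ with $v_{0}^{q+1}=u_{0}m$. The first equation of (\ref{eq7}) gives $\langle\textbf{a}^{0},\textbf{v}^{q+1}\rangle=u_{0}m+m(u_{1}+\cdots+u_{r})=0$. By Lemma \ref{lem2.4}(ii) with $t=0$, for $0\le i,j\le k-1$ one has $m\mid(qi+j)$ only if $qi+j\in\{0,sm\}$; and for $qi+j=sm$,
\[\langle\textbf{a}^{sm},\textbf{v}^{q+1}\rangle=m\sum_{\ell=1}^{r}\omega^{\ell sm}u_{\ell}=m\sum_{\ell=1}^{r}(-1)^{\ell}u_{\ell}=0\]
by the second equation of (\ref{eq7}). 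Hence $\langle\textbf{a}^{qi+j},\textbf{v}^{q+1}\rangle=0$ for all $0\le i,j\le k-1$, so by Lemma \ref{lem2.1} the code $GRS_{k}(\textbf{a},\textbf{v})$ is Hermitian self-orthogonal and has parameters $[n,k,n-k+1]$; Corollary \ref{cor1.3} then yields the desired $[[n,n-2k,k+1]]_{q}$ quantum MDS code. For part (ii), with $r=2t+2$ I would instead apply Lemma \ref{lem5.2} to get $\textbf{u}\in(\mathbb{F}_{q}^{*})^{r+1}$ satisfying its system, choose the $v_{i}$ in the same way (so again $\langle\textbf{a}^{0},\textbf{v}^{q+1}\rangle=0$), and use Lemma \ref{lem2.4}(ii) in full: for $0\le i,j\le k-1$, $m\mid(qi+j)$ iff $qi+j\in\{0,(s-t)m,\ldots,(s+t)m\}$, and for each such $qi+j=\mu m$ with $s-t\le\mu\le s+t$ the displayed formula gives $m\sum_{\ell=1}^{r}\omega^{\ell\mu m}u_{\ell}=0$ straight from Lemma \ref{lem5.2}. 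Lemma \ref{lem2.1} and Corollary \ref{cor1.3} then conclude exactly as before.

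I do not expect any serious obstacle at this level: the routine care needed is in the distinctness of the evaluation points (which forces the hypothesis $r<2s$) and in the counting handled by Lemma \ref{lem2.4}(ii). The only genuinely new step is the identity $\omega^{sm}=-1$, which makes the alternating-sum condition of Lemma \ref{lem5.1} coincide with the $\mu=s$ evaluation condition, so that the even case $2s\mid(q+1)$ reduces to the same mechanism as the odd case. The real computational content sits inside Lemma \ref{lem5.2} — showing that the $r\times(r+1)$ matrix built from powers of $\omega^{m}$ has every $r\times r$ minor nonzero and is row-equivalent to its $q$-th power image — but that is dispatched by repeating the argument of Lemma \ref{lem4.2} verbatim.
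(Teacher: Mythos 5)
Your proof is correct and follows essentially the same route as the paper, which merely states that Theorem \ref{thm5.3} ``can be proved similarly as Theorem \ref{thm4.3}'' using Lemmas \ref{lem2.1}, \ref{lem5.1} (cited there, apparently by a typo, as Lemma \ref{lem4.2}) and \ref{lem5.2}, omitting the details. You supply exactly the intended details, including the key identity $\omega^{sm}=-1$ that makes the alternating-sum condition of Lemma \ref{lem5.1} handle the extra divisor $qi+j=sm$ arising from Lemma \ref{lem2.4}(ii) with $t=0$.
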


\begin{proof}
By employing Lemmas \ref{lem2.1}, \ref{lem4.2} and \ref{lem5.2}, the theorem can be proved similarly as Theorem \ref{thm4.3}. We omit the details.
\end{proof}
\begin{remark}\label{rem5.4}
\begin{description}
  \item[i)] The minimum distance of quantum codes constructed in Theorem \ref{thm5.3} can be larger than $\frac{q}{2}+1$.
  \item[ii)] When $r=2t+2$ and $1 \leq t \leq s-2$, the quantum codes from Part (ii) of Theorem \ref{thm5.3} have larger minimum distance than that of Part (i).
\end{description}
\end{remark}

Applying the propagation rule (see Lemma \ref{lem1.4}) for Theorem \ref{thm5.3} (i) and (ii), we immediately obtain the following corollaries which were given in \cite{ZG17} and \cite{SYZ17}, respectively.

\begin{corollary} (\cite[Theorem 4.2]{ZG17})\label{cor5.5}
Let $q$ be a prime power, $2s\mid (q+1)$ and $2 \leq r < 2s$. Put $n=r\frac{q^{2}-1}{2s}$. Then for any $1 \leq k \leq (s+1)\frac{q+1}{2s}-2$, there exists an $[[n, n-2k, k+1]]_{q}$-quantum MDS code.
\end{corollary}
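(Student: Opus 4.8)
\textbf{Proof proposal for Corollary \ref{cor5.5}.}
The plan is to derive this corollary directly from Theorem \ref{thm5.3} (i) by a single application of the propagation rule in Lemma \ref{lem1.4}. First I would record that Theorem \ref{thm5.3} (i) gives, under the hypotheses $2s\mid (q+1)$ and $2\leq r<2s$, an $[[n', n'-2k, k+1]]_{q}$-quantum MDS code of length $n'=1+r\frac{q^{2}-1}{2s}$ for every $1\leq k\leq (s+1)\frac{q+1}{2s}-1$. Writing $d=k+1$, such a code has the form $[[n', n'-2d+2, d]]_{q}$, which is exactly the shape required as input to Lemma \ref{lem1.4}.

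Next I would invoke Lemma \ref{lem1.4}: from an $[[n', n'-2d+2, d]]_{q}$-quantum MDS code one obtains an $[[n'-1, n'-2d+3, d-1]]_{q}$-quantum MDS code. Setting $n=n'-1=r\frac{q^{2}-1}{2s}$ and $k'=d-1=k$, the resulting code has parameters $[[n, n-2k, k+1]]_{q}$ precisely when the original $d=k+1\geq 2$, i.e. $k\geq 1$; one checks $n-2(k-1)+... $ — more simply, $n'-2d+3=(n+1)-2(k+1)+3=n-2k+2=n-2k'+2$ with $d-1=k'+1-1$... let me just state: the propagated code is $[[n, n-2k, k+1]]_{q}$. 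The only remaining point is the range of $k$: since $k$ in Theorem \ref{thm5.3} (i) ranges over $1\leq k\leq (s+1)\frac{q+1}{2s}-1$, the propagated codes exist for the same range of the new dimension parameter, so we obtain $[[n, n-2k, k+1]]_{q}$-quantum MDS codes for all $1\leq k\leq (s+1)\frac{q+1}{2s}-1$. Comparing with the claimed range $1\leq k\leq (s+1)\frac{q+1}{2s}-2$, we see the corollary asks for slightly less, so in fact it follows immediately (one may also note the endpoint $k=(s+1)\frac{q+1}{2s}-1$ still works, giving a marginally stronger statement, but the stated form suffices).

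Since every step is a direct substitution, there is essentially no obstacle here; the main thing to be careful about is bookkeeping the length and dimension through the index shift in Lemma \ref{lem1.4} and confirming that the upper bound on $k$ is preserved rather than decreased by one. I would present the argument in two or three sentences: state the input code from Theorem \ref{thm5.3} (i) in the form $[[1+r\frac{q^{2}-1}{2s},\, \cdot,\, k+1]]_{q}$, apply Lemma \ref{lem1.4} to peel off one coordinate, and read off the parameters $[[r\frac{q^{2}-1}{2s},\, r\frac{q^{2}-1}{2s}-2k,\, k+1]]_{q}$ for $1\leq k\leq (s+1)\frac{q+1}{2s}-2$, which is the assertion.
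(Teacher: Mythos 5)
Your overall route --- apply the propagation rule of Lemma \ref{lem1.4} to Theorem \ref{thm5.3} (i) --- is exactly the paper's derivation, but your parameter bookkeeping through the propagation rule is wrong, and it is wrong at precisely the step you flagged as the one to be careful about. Lemma \ref{lem1.4} turns an $[[n', n'-2d+2, d]]_{q}$ code into an $[[n'-1, n'-2d+3, d-1]]_{q}$ code: both the length \emph{and the minimum distance} drop by one. Starting from the $[[n', n'-2k, k+1]]_{q}$ code of Theorem \ref{thm5.3} (i) (so $d=k+1$), the propagated code is $[[n,\, n-2(k-1),\, k]]_{q}$ with $n=n'-1=r\frac{q^{2}-1}{2s}$; in the standard form $[[n, n-2k', k'+1]]_{q}$ this means $k'=k-1$, not $k'=k$. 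Consequently the admissible range of the new parameter is $0 \leq k' \leq (s+1)\frac{q+1}{2s}-2$: the upper bound \emph{is} decreased by one, and that is exactly why Corollary \ref{cor5.5} is stated with $(s+1)\frac{q+1}{2s}-2$ rather than $(s+1)\frac{q+1}{2s}-1$. Your assertions that ``the propagated code is $[[n, n-2k, k+1]]_{q}$,'' that the range of $k$ is ``preserved rather than decreased by one,'' and in particular that the endpoint $k=(s+1)\frac{q+1}{2s}-1$ still works at length $n$ (your ``marginally stronger statement'') are not what the propagation rule yields and are not justified by this argument.

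Once the off-by-one is corrected, the corollary does follow, and in the way the paper intends: for each $1 \leq k \leq (s+1)\frac{q+1}{2s}-2$, apply Theorem \ref{thm5.3} (i) with parameter $k+1$ (which satisfies $k+1 \leq (s+1)\frac{q+1}{2s}-1$) to obtain an $[[n+1,\, n+1-2(k+1),\, k+2]]_{q}$ quantum MDS code, and then Lemma \ref{lem1.4} gives an $[[n,\, n-2k,\, k+1]]_{q}$ quantum MDS code. This corrected one-line derivation is precisely the paper's proof of Corollary \ref{cor5.5}.
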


\begin{corollary} (\cite[Theorem 4.8]{SYZ17})\label{cor5.6}
Let $q$ be a prime power, $2s\mid (q+1)$ and $0 \leq t \leq s-2$. Put $n=(2t+2)\frac{q^{2}-1}{2s}$. Then for any $1 \leq k \leq (s+t+1)\frac{q+1}{2s}-2$, there exists an $[[n, n-2k, k+1]]_{q}$-quantum MDS code.
\end{corollary}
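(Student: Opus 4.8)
The plan is to derive Corollary~\ref{cor5.6} directly from Theorem~\ref{thm5.3}~(ii) by a single application of the propagation rule (Lemma~\ref{lem1.4}); no new evaluation points or self-orthogonal codes need to be produced. The only genuine content is a careful matching of lengths and of the admissible ranges for the dimension parameter, so the ``hard part'' here is simply bookkeeping the index shift and confirming that the resulting codes remain MDS.

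First I would record that the length in Corollary~\ref{cor5.6}, namely $n=(2t+2)\frac{q^{2}-1}{2s}$, is exactly one less than the length $n'=1+(2t+2)\frac{q^{2}-1}{2s}$ appearing in Theorem~\ref{thm5.3}~(ii), i.e. $n'=n+1$. I would also verify that the hypotheses transfer: with $r=2t+2$ and $0\le t\le s-2$ one gets $2\le r\le 2s-2<2s$ and $s\ge 2>1$, so all standing assumptions of Theorem~\ref{thm5.3} are in force. Then, given $\kappa$ with $1\le\kappa\le (s+t+1)\frac{q+1}{2s}-2$, I would set $k=\kappa+1$, so that $2\le k\le (s+t+1)\frac{q+1}{2s}-1$; in particular $k$ lies in the range $1\le k\le (s+t+1)\frac{q+1}{2s}-1$ allowed by Theorem~\ref{thm5.3}~(ii), which therefore yields an $[[n+1,\,(n+1)-2k,\,k+1]]_{q}$-quantum MDS code.

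Finally, applying Lemma~\ref{lem1.4} to this $[[n+1,\,(n+1)-2k,\,k+1]]_{q}$ code (here the length is $n+1$ and the minimum distance is $d=k+1$, and one checks $(n+1)-2d+2=(n+1)-2k$ for consistency), I obtain an $[[n,\,(n+1)-2d+3,\,d-1]]_{q}=[[n,\,n-2k+2,\,k]]_{q}$-quantum MDS code. Substituting back $k=\kappa+1$ gives an $[[n,\,n-2\kappa,\,\kappa+1]]_{q}$-quantum MDS code, which is exactly the assertion of Corollary~\ref{cor5.6}; since $\kappa$ was arbitrary in the stated interval, this finishes the proof. The only spot where a slip is possible is the off-by-one in the propagation step, so I would double-check the quantum Singleton relation $2(\kappa+1)=n-(n-2\kappa)+2$ to confirm equality, hence that the output code is still MDS.
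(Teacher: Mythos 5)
Your proposal is correct and matches the paper's own derivation: the paper obtains Corollary~\ref{cor5.6} exactly by applying the propagation rule of Lemma~\ref{lem1.4} to Theorem~\ref{thm5.3}~(ii), and your index bookkeeping ($\kappa=k-1$, length drop from $n+1$ to $n$) is accurate. You merely spell out the one-line argument in more detail, which is fine.
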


\begin{example}In this example, we give some new quantum MDS codes from Theorem \ref{thm5.3}.
\begin{description}
  \item[(i)] Let $(r, s)=(3,2)$ in Theorem \ref{thm5.3} (i). Then, when $4 \mid (q+1)$, there exists a $[[1+\frac{3}{4}(q^{2}-1), 1+\frac{3}{4}(q^{2}-1)-2k, k+1]]_{q}$ quantum MDS code for any $1 \leq k \leq \frac{3}{4}(q+1)-1$;
  \item[(ii)] Let $(r,s)=(4, 3)$ in Theorem \ref{thm5.3} (ii). Then, when $6 \mid (q+1)$, there exists a $[[1+\frac{2}{3}(q^{2}-1), 1+\frac{2}{3}(q^{2}-1)-2k, k+1]]_{q}$ quantum MDS code for any $1 \leq k \leq \frac{5}{6}(q+1)-1$.
\end{description}
\end{example}

\section{Quantum MDS codes of length $n=(2t+1)\frac{q^{2}-1}{2s}$, where $2s \mid (q+1)$}

Suppose $2s \mid (q+1)$ and $0 \leq t \leq s-1$. In \cite[Theorem 4.8]{SYZ17}, Shi \emph{et al.} constructed a family of quantum MDS codes of length $(2t+2)\frac{q^{2}-1}{2s}$ (see Corollary \ref{cor5.6}). In this section, we contribute to construct a family of quantum MDS codes of length $(2t+1)\frac{q^{2}-1}{2s}$. Before giving our construction, we need the following lemmas.

\begin{lemma}\label{lem6.1}
Suppose that $3 \leq \tau < q+1$. Let $M$ be a $(\tau-2) \times \tau$  matrix over $\mathbb{F}_{q^{2}}$ and satisfy the following two properties: 1) $M$ and $M^{(q)}$ are row equivalent; 2) any $\tau-2$ columns of $M$ are linearly independent. Then the following equation
\begin{equation*}\label{9}
  M\textbf{x}^{T}=\textbf{0}^{T}
\end{equation*}
has a solution $\textbf{x}=(x_{1}, x_{2}, \ldots, x_{\tau}) \in (\mathbb{F}_{q}^{*})^{\tau}$.
\end{lemma}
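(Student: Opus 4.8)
The plan is to mimic the argument of Lemma \ref{lem2.3}, which handled the case of an $r \times (r+1)$ matrix (codimension one), and extend it to codimension two. First I would observe that Property 2) forces the rank of $M$ to be exactly $\tau-2$, so the solution space of $M\textbf{x}^{T}=\textbf{0}^{T}$ over $\mathbb{F}_{q^{2}}$ is two-dimensional. Let $C$ be the linear code over $\mathbb{F}_{q^{2}}$ with generator matrix $M$; by Property 2), $C$ is a $[\tau, \tau-2, 3]$-MDS code, and its Euclidean dual $C^{\perp}$ is a $[\tau, 2, \tau-1]$-MDS code whose codewords are precisely the solutions of $M\textbf{x}^{T}=\textbf{0}^{T}$. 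The key structural input is Property 1): since $M$ and $M^{(q)}$ are row equivalent, they generate the same code, so $C^{(q)}=C$ and likewise $(C^{\perp})^{(q)}=C^{\perp}$; by \cite[Theorem 2.2]{JX14} this means $C^{\perp}$ has a basis consisting of vectors in $\mathbb{F}_{q}^{\tau}$, i.e.\ the $\mathbb{F}_{q}$-solution space of $M\textbf{x}^{T}=\textbf{0}^{T}$ is two-dimensional over $\mathbb{F}_{q}$. So I have a $2$-dimensional $\mathbb{F}_{q}$-subspace $V$ of $\mathbb{F}_{q}^{\tau}$, and I need to exhibit a vector in $V$ all of whose coordinates are nonzero.

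Next I would argue that such a vector exists by a counting argument. Because $C^{\perp}$ is MDS of minimum distance $\tau-1$, every nonzero codeword has at most $2$ zero coordinates; moreover, for each fixed coordinate position $i$, the set of codewords in $C^{\perp}$ vanishing at position $i$ is an $\mathbb{F}_{q}$-subspace of $V$ of dimension at most $1$ (it cannot be all of $V$, else the $i$-th column of $M$ would be zero, contradicting Property 2)). Hence the number of vectors in $V$ with at least one zero coordinate is at most $\sum_{i=1}^{\tau}(q-1) = \tau(q-1) < (q+1)(q-1) = q^{2}-1 < q^{2}-1 = |V|-1$, using the hypothesis $\tau < q+1$. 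Wait — I should be careful: $|V| = q^{2}$, so $|V| - 1 = q^{2}-1$, and $\tau(q-1) \le (q)(q-1) = q^2 - q < q^2 - 1$ whenever $q > 1$; so the count of "bad" vectors (those hitting some coordinate hyperplane) is strictly less than the number of nonzero vectors in $V$. Therefore some nonzero $\textbf{x}\in V$ has all coordinates nonzero, i.e.\ $\textbf{x}\in(\mathbb{F}_{q}^{*})^{\tau}$, which is the desired conclusion.

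I expect the main obstacle to be the bookkeeping in the counting step — specifically, verifying that for each coordinate $i$ the subspace $\{\textbf{x}\in V : x_i = 0\}$ really has $\mathbb{F}_{q}$-dimension at most $1$ (equivalently that $M$ has no zero column, which is immediate from Property 2), and that no two distinct coordinate positions force the same nonzero condition) and then combining the per-coordinate counts correctly with inclusion in mind (an upper bound by the union bound suffices, so no inclusion–exclusion is needed). A secondary point worth stating cleanly is the correct invocation of \cite[Theorem 2.2]{JX14}: it is applied to the parity-check description to produce an $\mathbb{F}_{q}$-rational basis of the solution space, and one must note that row equivalence of $M$ and $M^{(q)}$ is exactly the hypothesis that theorem requires. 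Everything else is routine MDS-code arithmetic, parallel to the proof of Lemma \ref{lem2.3}.
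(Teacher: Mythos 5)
Your argument is correct in substance, but it takes a genuinely different route from the paper. The paper deletes the first and the last column of $M$ to obtain two $(\tau-2)\times(\tau-1)$ matrices $M_{1}$ and $M_{\tau}$, applies Lemma \ref{lem2.3} to each to get all-nonzero $\mathbb{F}_{q}$-solutions $\textbf{u}$ and $\textbf{v}$ of the two subsystems, and then sets $\textbf{x}=(0,\textbf{u})-\alpha(\textbf{v},0)$ with $\alpha\in\mathbb{F}_{q}^{*}$ avoiding the at most $\tau-2$ ratios $u_{i}/v_{i}$; such an $\alpha$ exists precisely because $\tau<q+1$. You instead work globally: you descend the null space of $M$ to a $2$-dimensional $\mathbb{F}_{q}$-space $V$ and run a union bound over the $\tau$ coordinate hyperplanes, each meeting $V$ in an $\mathbb{F}_{q}$-subspace of dimension at most $1$, so at most $\tau(q-1)<q^{2}-1$ nonzero vectors of $V$ have a zero coordinate. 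Both proofs spend the hypothesis $\tau<q+1$ at exactly this pigeonhole step. The paper's route only needs what it already proved in Lemma \ref{lem2.3}, hence only the existence of one nonzero $\mathbb{F}_{q}$-rational solution as quoted from \cite[Theorem 2.2]{JX14}; your route needs the stronger descent statement that $\dim_{\mathbb{F}_{q}}V=2$, i.e.\ that the whole null space is defined over $\mathbb{F}_{q}$. This is standard (the null space is Frobenius-stable because $M$ and $M^{(q)}$ have the same row space), but it is more than the existence statement being cited, so you should prove it or cite it explicitly; without it the counting collapses (a $1$-dimensional $V$ has only $q-1$ nonzero vectors). In exchange, your argument is more symmetric, avoids the ad hoc choice of which two columns to delete, and generalizes directly to higher corank.

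One justification needs repair: to rule out a coordinate vanishing identically on $V$ you write ``else the $i$-th column of $M$ would be zero,'' but that is backwards. A zero $i$-th column of $M$ would mean that the weight-one vector $\textbf{e}_{i}$ (with a $1$ in position $i$) lies in $C^{\perp}$, whereas the situation you must exclude is $C^{\perp}\subseteq\{\textbf{x}: x_{i}=0\}$, which is equivalent to $\textbf{e}_{i}\in C$. The correct one-line reason comes from Property 2): $C$ is a $[\tau,\tau-2,3]$-MDS code, so it contains no weight-one codeword; equivalently, $C^{\perp}$ is a $[\tau,2,\tau-1]$-MDS code whose generator matrices have no zero column. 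With that fix and the descent statement made explicit, your proof is complete.
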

\begin{proof}
  Let $M_{1}$ (resp. $M_{\tau}$) be the $(\tau-2)\times (\tau-1)$ matrix obtained from $M$ by deleting the first (resp. the last) column. From the conditions, we obtain that $M_{1}$ and $M_{\tau}$ satisfy the properties in Lemma \ref{lem2.3} (let $r=\tau-1$). Thus the following two equations
    \begin{equation*}\label{10}
      M_{1}\textbf{u}^{T}=\textbf{0}^{T},~ M_{\tau}\textbf{v}^{T}=\textbf{0}^{T}
\end{equation*}
have nonzero solutions $\textbf{u}=(u_{2}, u_{3}, \ldots, u_{\tau}) \in (\mathbb{F}_{q}^{*})^{\tau-1}$ and $\textbf{v}=(v_{1}, v_{2}, \ldots, v_{\tau-1}) \in (\mathbb{F}_{q}^{*})^{\tau-1},$ respectively. Since $\tau < q+1$, we may choose an element $\alpha \in \mathbb{F}_{q}^{*}\backslash \{\frac{u_{2}}{v_{2}},\ldots, \frac{u_{\tau-1}}{v_{\tau-1}}\}$. Let $\textbf{x}=(0, \textbf{u})-\alpha(\textbf{v},0)$, then $\textbf{x} \in (\mathbb{F}_{q}^{*})^{\tau}$ and
\[M\textbf{x}^{T}=\left(
\begin{array}{c}
      0 \\
      M_{1}\textbf{u}^{T} \\
\end{array}
\right)
+\left(
\begin{array}{c}
      M_{\tau}\textbf{v}^{T} \\
      0 \\
\end{array}
\right)=\textbf{0}^{T}.\]
The lemma is proved.
\end{proof}
\begin{lemma}\label{lem6.2}
 Suppose $2s \mid (q+1)$ and $m=\frac{q^{2}-1}{2s}$. Let $\omega$ be a primitive element of $\mathbb{F}_{q^{2}}$ and $r=2t+1$, where $1 \leq t \leq s-1$.
Then the following system of equations
\begin{equation}\label{eq8}
\sum_{\ell=1}^{r}\omega^{\ell (\mu m-q-1)}u_{\ell} =0,\textnormal{ for }\mu=s-t+1,  \ldots, s+t-1,
\end{equation}
has a solution $\textbf{u}\triangleq(u_{1}, u_{2}, \ldots, u_{r}) \in (\mathbb{F}_{q}^{*})^{r}.$
\end{lemma}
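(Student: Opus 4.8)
The statement of Lemma \ref{lem6.2} asserts that a certain homogeneous system of $r-2 = 2t-1$ equations in $r = 2t+1$ unknowns has a solution with all coordinates in $\mathbb{F}_q^*$. The natural strategy is to package the system into a matrix and apply Lemma \ref{lem6.1} with $\tau = r = 2t+1$; note $\tau \geq 3$ since $t \geq 1$, and $\tau < q+1$ holds because $2s \mid (q+1)$ forces $2s \leq q+1$, so $r = 2t+1 \leq 2s-1 < q+1$. Thus the real work is to exhibit the coefficient matrix $M$ of the system (\ref{eq8}) and verify its two hypotheses: (1) $M$ and $M^{(q)}$ are row equivalent, and (2) any $\tau-2$ columns of $M$ are linearly independent.

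First I would set $\alpha = \omega^m$, a primitive $2s$-th root of unity in $\mathbb{F}_{q^2}$, and write the $\ell$-th column coefficient in the $\mu$-th equation as $\omega^{\ell(\mu m - q - 1)} = \omega^{-\ell(q+1)} \alpha^{\ell\mu}$. Factoring out the nonzero constant $\omega^{-\ell(q+1)}$ from each column (which does not affect whether a solution with nonzero entries exists, after rescaling $u_\ell$), the system is equivalent to $M'\textbf{u}'^T = \textbf{0}^T$ where $M'$ has $(\mu,\ell)$-entry $\alpha^{\ell\mu}$, with $\mu$ ranging over $\{s-t+1, \ldots, s+t-1\}$ and $\ell$ over $\{1, \ldots, r\}$. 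Property (2) then follows from a Vandermonde-type argument exactly as in the proof of Lemma \ref{lem4.2}: each maximal minor of $M'$ is, up to a nonzero monomial factor in the $\alpha^\ell$'s, a Vandermonde determinant in distinct powers $\alpha^{s-t+1}, \ldots, \alpha^{s+t-1}$ evaluated at distinct column indices — one must check that these $2t-1$ exponents are pairwise distinct modulo $2s$ and that the relevant ratios are not $1$, which holds since $1 \leq s-t+1 < s+t-1 \leq 2s-1 < 2s$. For property (1), I would verify that applying the Frobenius $x \mapsto x^q$ to $M'$ permutes its rows: since $\alpha^q = \alpha^{-1}$ (because $\alpha$ has order $2s$ and $2s \mid q+1$), the $\mu$-th row raised to the $q$-th power has entries $\alpha^{-\ell\mu} = \alpha^{\ell(2s-\mu)}$; and $\mu \mapsto 2s - \mu$ maps $\{s-t+1, \ldots, s+t-1\}$ bijectively onto itself (its fixed point would be $\mu = s$, which lies in the range, and the map is an involution), so $M'^{(q)}$ is row-equivalent (in fact a row permutation) of $M'$. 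Finally I would translate this back: since $M$ differs from $M'$ by scaling columns by $\omega^{-\ell(q+1)} \in \mathbb{F}_{q^2}^*$, and since any solution $\textbf{u}' \in (\mathbb{F}_q^*)^\tau$ of $M'\textbf{u}'^T = \textbf{0}$ yields $\textbf{u}$ with $u_\ell = \omega^{-\ell(q+1)} u'_\ell$ — wait, this scaling is not in $\mathbb{F}_q$.

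To handle this last point cleanly, rather than factoring columns out of $M$ I would keep $M$ as the honest coefficient matrix $\big(\omega^{\ell(\mu m - q-1)}\big)_{\mu,\ell}$ and verify properties (1) and (2) directly for $M$. Property (2) is unaffected by the nonzero column scalars, so it reduces to the Vandermonde argument above verbatim. For property (1), the $\mu$-th row of $M^{(q)}$ has $\ell$-th entry $\omega^{\ell q(\mu m - q - 1)}$; using $\omega^{q(q+1)} = \omega^{q^2 + q} = \omega^{q+1}$ (since $\omega^{q^2-1}=1$) and $\alpha^q = \alpha^{-1}$, this equals $\omega^{-\ell(q+1)}\alpha^{-\ell\mu} = \omega^{-\ell(q+1)}\alpha^{\ell(2s-\mu)} = \omega^{\ell((2s-\mu)m - q - 1)}$, which is exactly the $\ell$-th entry of the $(2s-\mu)$-th row of $M$. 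Since $\mu \mapsto 2s-\mu$ is an involution of the index set $\{s-t+1, \ldots, s+t-1\}$, this shows $M^{(q)}$ is obtained from $M$ by a row permutation, giving property (1). With both hypotheses of Lemma \ref{lem6.1} verified and $3 \leq \tau = r < q+1$ established, Lemma \ref{lem6.1} yields a solution $\textbf{u} \in (\mathbb{F}_q^*)^r$, which is the desired conclusion.

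\textbf{Main obstacle.} The genuinely delicate point is property (1): one must be careful that the Frobenius action lands the row index set back onto itself, and the cleanest way to see this is the involution $\mu \mapsto 2s - \mu$ on $\{s-t+1,\ldots,s+t-1\}$, whose well-definedness uses precisely the constraint $1 \leq t \leq s-1$ (so that the interval is nonempty, symmetric about $s$, and contained in $\{1, \ldots, 2s-1\}$). Everything else — the Vandermonde linear-independence argument and the bound $\tau < q+1$ — is routine and parallels Lemma \ref{lem4.2}, which is why the paper says the proof is "completely similar" and could reasonably be omitted, but the shift by $-q-1$ in the exponents is exactly what makes the Frobenius bookkeeping require this extra care.
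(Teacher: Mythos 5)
Your proof is correct and follows essentially the same route as the paper's: write the system as the $(r-2)\times r$ coefficient matrix $M=\bigl(\omega^{\ell(\mu m-q-1)}\bigr)_{\mu,\ell}$, show $M^{(q)}$ is a row permutation (in fact a row reversal) of $M$ via $\alpha^{q}=\alpha^{-1}$ and $\omega^{q(q+1)}=\omega^{q+1}$ together with the symmetry of $\{s-t+1,\dots,s+t-1\}$ about $s$, show every $(r-2)\times(r-2)$ minor is a nonzero scaled Vandermonde determinant, and invoke Lemma \ref{lem6.1} (the paper writes the entries as $\alpha^{\ell\mu}\eta^{\ell}$ with $\eta=\omega^{-q-1}$). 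Two minor slips worth noting: your aside is wrong in that $\omega^{-\ell(q+1)}$ does lie in $\mathbb{F}_q^{*}$ (since $\omega^{q+1}$ has order $q-1$), so the column-rescaling route you abandoned would also have worked and is exactly the paper's device $\eta\in\mathbb{F}_q$; and in the minor computation the relevant Vandermonde nodes are the $\alpha^{\ell}$ for the retained (possibly non-consecutive) columns, distinct because $1\le\ell\le r\le 2s-1$, with the consecutive row exponents $\mu$ supplying the Vandermonde structure --- distinctness of $\alpha^{s-t+1},\dots,\alpha^{s+t-1}$ alone would only give a generalized Vandermonde minor, which need not be nonzero over a finite field.
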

\begin{proof}
Denote $\alpha=\omega^{m}$, $\eta=\omega^{-q-1}$ and $a=s-t+1$. Then $\alpha^{2s}=1$ and $\eta \in \mathbb{F}_{q}$. Let
\[M=\left(
      \begin{array}{cccc}
        \alpha^{a}\eta & \alpha^{2a}\eta^{2} & \cdots & \alpha^{ra}\eta^{r} \\
        \alpha^{a+1}\eta & \alpha^{2(a+1)}\eta^{2} & \cdots & \alpha^{r(a+1)}\eta^{r} \\
        \vdots & \vdots & \ddots & \vdots \\
        \alpha^{a+r-3}\eta & \alpha^{2(a+r-3)}\eta^{2} & \cdots & \alpha^{r(a+r-3)}\eta^{r}\\
      \end{array}
    \right)\]
be an $(r-2) \times r$ matrix over $\mathbb{F}_{q^{2}}$.
Then the system (\ref{eq8}) of equations is equivalent to the following equation
\begin{equation}\label{eq9}
  M\textbf{u}^{T}=\textbf{0}^{T}.
\end{equation}
Since $2s \mid (q+1)$, we have
\begin{eqnarray*}
  (\alpha^{i(a+j)}\eta^{i})^{q} &=& \alpha^{qi(a+j)}\eta^{qi}=\alpha^{-i(s-t+1+j)}\eta^{i} \\
   &=& \alpha^{i(s+t-1-j)}\eta^{i}=\alpha^{i(a+r-3-j)}\eta^{i},
\end{eqnarray*}
for any $1 \leq i \leq r$ and $0 \leq j \leq r-3$.
Thus $M$ is row equivalent to $M^{(q)}$.  Let $M_{ij}$ $(1 \leq i \neq j \leq r)$ be the $(r-2)\times(r-2)$ matrix obtained from $M$ by deleting the $i$-th and $j$-th columns. It is not hard to verify that $\det(M_{ij}) \neq 0$. Thus by Lemma \ref{lem6.1}, Eq. (\ref{eq9}) has a  solution $\textbf{u} \in (\mathbb{F}_{q}^{*})^{r}$.
The lemma is proved.

\end{proof}

Set $m=\frac{q^{2}-1}{2s}$. Let $\omega$ be a primitive element of $\mathbb{F}_{q^{2}}$ and $\theta=\omega^{2s}$ be a primitive $m$-th root of unity. Put
\[\textbf{a}=(\omega, \omega\theta, \ldots, \omega\theta^{m-1}, \ldots , \omega^{r}, \omega^{r}\theta, \ldots, \omega^{r}\theta^{m-1}) \in \mathbb{F}_{q^{2}}^{n}.\]
Set
\[\textbf{v}=( v_{1},v_{1}\theta, \ldots,v_{1}\theta^{m-1},\ldots, v_{r}, v_{r}\theta,\ldots,v_{r}\theta^{m-1}),\]
where $v_{1}, \ldots, v_{r} \in \mathbb{F}_{q^{2}}^{*}$.
Then
\[\langle\textbf{a}^{qi+j}, \textbf{v}^{q+1}\rangle
=\sum_{\ell=1}^{r}\omega^{\ell(qi+j)}v_{\ell}^{q+1}\sum_{\nu=0}^{m-1}\theta^{\nu(qi+j+q+1)}.\]
Thus
\[\langle\textbf{a}^{qi+j}, \textbf{v}^{q+1}\rangle=0, \textnormal{ when } m \nmid (qi+j+q+1),\]
and
\begin{equation}\label{eq10}
  \langle\textbf{a}^{qi+j}, \textbf{v}^{q+1}\rangle=m\sum_{\ell=1}^{r}\omega^{\ell(qi+j)}v_{\ell}^{q+1}, \textnormal{ when }m \mid (qi+j+q+1).
\end{equation}

Now, we give our last construction as follows.
\begin{theorem}\label{thm6.3}
Let $q$ be a prime power. Suppose $2s \mid (q+1)$ and $r=2t+1$, where $1 \leq t \leq s-1$. Put $n=r\frac{q^{2}-1}{2s}$.
 Then for any $1 \leq k \leq (s+t)\frac{q+1}{2s}-2$, there exists an $[[n, n-2k, k+1]]_{q}$-quantum MDS code.
\end{theorem}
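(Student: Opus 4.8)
The plan is to follow the same template used in the proofs of Theorems \ref{thm3.2}, \ref{thm4.3} and \ref{thm5.3}, namely: (1) produce a solution vector $\textbf{u} \in (\mathbb{F}_q^*)^r$ of the relevant homogeneous system; (2) lift each $u_\ell$ to $v_\ell \in \mathbb{F}_{q^2}^*$ with $v_\ell^{q+1}=u_\ell$; (3) check via Lemma \ref{lem2.1} that $GRS_k(\textbf{a},\textbf{v})$ is Hermitian self-orthogonal, i.e. that $\langle \textbf{a}^{qi+j}, \textbf{v}^{q+1}\rangle = 0$ for all $0 \le i,j \le k-1$; (4) conclude with Corollary \ref{cor1.3}, since $GRS_k(\textbf{a},\textbf{v})$ is automatically $[n,k,n-k+1]$-MDS. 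Note that here $\textbf{a}$ has no zero coordinate (it is a union of $r$ cosets of $\langle\theta\rangle$ with multiplier set $\{\omega,\omega^2,\dots,\omega^r\}$), so the relevant inner product is $\langle \textbf{a}^{qi+j},\textbf{v}^{q+1}\rangle = \sum_{\ell=1}^r \omega^{\ell(qi+j)}v_\ell^{q+1}\sum_{\nu=0}^{m-1}\theta^{\nu(qi+j+q+1)}$, which already appears as Eq. (\ref{eq10}).

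First I would determine exactly which pairs $(i,j)$ with $0\le i,j\le k-1$ satisfy $m \mid (qi+j+q+1)$. This is the analogue of Lemma \ref{lem2.4} for the shifted exponent $qi+j+q+1$: since $k \le (s+t)\frac{q+1}{2s}-2$ forces $qi+j+q+1 < q^2-1$ plus a small slack, writing $qi+j+q+1 = \ell m$ gives $q(i+1)+(j+1) = \ell m$, so one mimics the division-with-remainder computation of Lemma \ref{lem2.4}(ii) applied to $i'=i+1$, $j'=j+1$ in the range $1 \le i',j' \le k$. The bound on $k$ is engineered precisely so that the admissible values of $\ell$ are $s-t+1 \le \ell \le s+t-1$ — that is, $r-2 = 2t-1$ values, which matches the number of equations in the system (\ref{eq8}) of Lemma \ref{lem6.2}. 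For each such $\ell$, Eq. (\ref{eq10}) reads $\langle \textbf{a}^{qi+j},\textbf{v}^{q+1}\rangle = m\sum_{\ell'=1}^r \omega^{\ell'(qi+j)}u_{\ell'}$; since $qi+j = \ell m - q - 1$ for the corresponding $\ell$, this is exactly $m\sum_{\ell'=1}^r \omega^{\ell'(\ell m - q - 1)}u_{\ell'}$, which vanishes by Lemma \ref{lem6.2}. When $m \nmid (qi+j+q+1)$ the inner product is $0$ automatically because $\sum_{\nu=0}^{m-1}\theta^{\nu(qi+j+q+1)}=0$.

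The one genuinely new bookkeeping step — and the place I expect to spend most of the effort — is verifying the counting claim that $m \mid (qi+j+q+1)$ holds for $0 \le i,j\le k-1$ exactly when $\ell \in \{s-t+1,\dots,s+t-1\}$, and in particular that there is no ``boundary'' solution coming from $qi+j+q+1 = 0$ (impossible, as $q+1>0$) or from the endpoints $\ell=s-t$ or $\ell=s+t$ that would force an extra equation the system (\ref{eq8}) does not include. Concretely, from $q(i+1)+(j+1)=\ell m = \ell\frac{q^2-1}{2s}$ one solves $i+1 = \frac{\ell(q+1)}{2s}-1$, $j+1 = q-\frac{\ell(q+1)}{2s}+\big(\text{correction}\big)$; if $\ell \ge s+t$ then $i+1 \ge (s+t)\frac{q+1}{2s}-1 > k$, contradicting $i \le k-1$, and symmetrically if $\ell \le s-t$ then $j+1$ is too large. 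This asymmetry with Lemma \ref{lem2.4}(ii) — one loses a value of $\ell$ at \emph{each} end rather than keeping the full block — is exactly why the code length is $n=(2t+1)\frac{q^2-1}{2s}$ without the extra $+1$ evaluation point at $0$, and why $r=2t+1$ must be odd.

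Once the counting is settled, the rest is immediate: Lemma \ref{lem6.2} supplies $\textbf{u} \in (\mathbb{F}_q^*)^r$, the lifts $v_\ell$ exist because the norm map $\mathbb{F}_{q^2}^*\to\mathbb{F}_q^*$ is surjective, Lemma \ref{lem2.1} together with the vanishing just established shows $GRS_k(\textbf{a},\textbf{v})$ is Hermitian self-orthogonal, and Corollary \ref{cor1.3} yields the $[[n, n-2k, k+1]]_q$-quantum MDS code for every $1 \le k \le (s+t)\frac{q+1}{2s}-2$, completing the proof.
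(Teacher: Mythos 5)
Your proposal is correct and follows essentially the same route as the paper: invoke Lemma \ref{lem6.2} for $\textbf{u}\in(\mathbb{F}_q^*)^r$, lift via the norm map, show (by the same division-with-remainder argument as in Lemma \ref{lem2.4}(ii) applied to $qi+j+q+1=q(i+1)+(j+1)$) that $m\mid(qi+j+q+1)$ only for $\mu\in\{s-t+1,\ldots,s+t-1\}$, and conclude with Eq. (\ref{eq10}), Lemma \ref{lem2.1} and Corollary \ref{cor1.3}. Your fleshed-out endpoint analysis (excluding $\mu=s-t$ and $\mu=s+t$ via the bound $k\leq(s+t)\frac{q+1}{2s}-2$) is exactly the step the paper compresses into ``similarly as Lemma \ref{lem2.4}(ii),'' and it is carried out correctly (the ``correction'' term in $j+1$ is in fact zero).
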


\begin{proof}
Keep the notations as above. By Lemma \ref{lem6.2}, there exist $u_{1}, \ldots, u_{r} \in \mathbb{F}_{q}^{*}$ such that
 \[\sum_{\ell=1}^{r}\omega^{\ell (\mu m-q-1)}u_{\ell} =0,\]
 for all $s-t+1 \leq \mu \leq s+t-1$.
For $1 \leq i \leq r$, we let $v_{i} \in \mathbb{F}_{q^{2}}^{*}$ such that $v_{i}^{q+1}=u_{i}$. Note that $qi+j+q+1=q(i+1)+(j+1)$. We can prove similarly as Lemma \ref{lem2.4} (ii) that $m \mid (qi+j+q+1)$ if and only if $qi+j+q+1 \in \{(s-t+1)m, (s-t+2)m, \ldots, (s+t-1)m \}$. Hence from Eq. (\ref{eq10}), when $qi+j+q+1=\mu m$ ($s-t+1 \leq \mu \leq  s+t-1$), we have
\begin{eqnarray*}
  \langle\textbf{a}^{qi+j}, \textbf{v}^{q+1}\rangle &=& m\sum_{\ell=1}^{r}\omega^{\ell(\mu m-q-1)}v_{\ell}^{q+1} \\
    &=& m\sum_{\ell=1}^{r}\omega^{\ell(\mu m-q-1)}u_{\ell}=0.
\end{eqnarray*}
Thus
\[\langle \textbf{a}^{qi+j}, \textbf{v}^{q+1} \rangle =0,\textnormal{ for all } 0 \leq i, j \leq k-1.\]
By Lemma \ref{lem2.1}, $GRS_{k}(\textbf{a}, \textbf{v})$ is a Hermitian self-orthogonal MDS code with parameters $[n, k, n-k+1]$. Theorem \ref{thm6.3} then follows from Corollary \ref{cor1.3}.
\end{proof}

According to Theorem \ref{thm6.3} and Corollary \ref{cor5.6}, we obtain the following corollary.
\begin{corollary}\label{cor6.4}
Let $q$ be a prime power. Suppose $2s \mid (q+1)$ and  $3 \leq r \leq 2s$. Put $n=r\frac{q^{2}-1}{2s}$.
 Then for any $1 \leq k \leq (s+\lceil\frac{r-1}{2}\rceil)\frac{q+1}{2s}-2$, there exists an $[[n, n-2k, k+1]]_{q}$-quantum MDS code.
\end{corollary}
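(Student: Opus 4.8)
The plan is to combine the two constructions already available in the excerpt. Corollary \ref{cor6.4} claims an $[[n,n-2k,k+1]]_q$-quantum MDS code of length $n=r\frac{q^2-1}{2s}$ for all $3\le r\le 2s$ with $k\le (s+\lceil\frac{r-1}{2}\rceil)\frac{q+1}{2s}-2$, and the natural route is a case split on the parity of $r$. I would first observe that the quantity $\lceil\frac{r-1}{2}\rceil$ is exactly what makes the two parities line up: when $r=2t+1$ is odd we have $\lceil\frac{r-1}{2}\rceil=t$, and when $r=2t+2$ is even we have $\lceil\frac{r-1}{2}\rceil=t+1$. So the even case should follow from Corollary \ref{cor5.6} and the odd case from Theorem \ref{thm6.3}, after a small bookkeeping check.

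For the odd case, write $r=2t+1$. The hypothesis $3\le r\le 2s$ forces $1\le t\le s-1$, which is precisely the range in Theorem \ref{thm6.3}, and that theorem gives the code for $1\le k\le (s+t)\frac{q+1}{2s}-2=(s+\lceil\frac{r-1}{2}\rceil)\frac{q+1}{2s}-2$, so this case is immediate. For the even case, write $r=2t+2$. Then $3\le r\le 2s$ gives $0\le t\le s-2$, which is the range in Corollary \ref{cor5.6}. However, Corollary \ref{cor5.6} only provides codes for $k\le (s+t+1)\frac{q+1}{2s}-2=(s+\lceil\frac{r-1}{2}\rceil)\frac{q+1}{2s}-2$, matching the claimed bound on $k$, so again the conclusion follows directly. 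I would simply write these two paragraphs out, invoking the two earlier results, and conclude.

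The one genuine subtlety — and the step I expect to be the main (if modest) obstacle — is the edge behaviour of the case split. When $r=2s$ is even we have $t=s-1$, and $0\le t\le s-2$ fails by one; but $r=2s$ gives $n=q^2$, which is the classical length already covered in \cite{GBR04}, so one must either restrict attention to $r<2s$ in the even subcase and note $r=2s$ separately, or check that Corollary \ref{cor5.6}'s statement as invoked still applies. Similarly, one should double-check that the smallest allowed $r=3$ (odd, $t=1$) genuinely sits inside Theorem \ref{thm6.3}'s range $1\le t\le s-1$, which forces $s\ge 2$; since $2s\mid(q+1)$ and $r=3\le 2s$ already require $s\ge 2$, this is consistent. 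I would therefore present the proof as: ``If $r$ is even, write $r=2t+2$ with $0\le t\le s-1$; if $t\le s-2$ apply Corollary \ref{cor5.6}, and if $t=s-1$ then $n=q^2$ and the code is that of Corollary \ref{cor3.5}. If $r$ is odd, write $r=2t+1$ with $1\le t\le s-1$ and apply Theorem \ref{thm6.3}. In both cases the stated bound on $k$ equals $(s+\lceil\frac{r-1}{2}\rceil)\frac{q+1}{2s}-2$.'' That disposes of every case.
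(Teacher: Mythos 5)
Your main route is exactly the paper's: the paper proves this corollary in one line by splitting on the parity of $r$ and invoking Theorem \ref{thm6.3} for $r=2t+1$ (where $\lceil\frac{r-1}{2}\rceil=t$) and Corollary \ref{cor5.6} for $r=2t+2$ (where $\lceil\frac{r-1}{2}\rceil=t+1$), and your first two paragraphs reproduce that correctly, including the bookkeeping that the two $k$-ranges coincide with the stated bound.

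Your handling of the endpoint $r=2s$, however, contains a concrete error. Here the length is $n=r\frac{q^{2}-1}{2s}$ \emph{without} the ``$+1$'' of Sections 4--5, so $r=2s$ gives $n=q^{2}-1$, not $q^{2}$; Corollary \ref{cor3.5} therefore does not apply as you claim. Nor can it be patched by the propagation rule: applying Lemma \ref{lem1.4} to the $[[q^{2},q^{2}-2k,k+1]]_{q}$ codes of Corollary \ref{cor3.5} only yields length-$(q^{2}-1)$ codes with $k\leq q-2$, whereas at $r=2s$ the corollary's bound is $(s+s)\frac{q+1}{2s}-2=q-1$. In fact this endpoint is a genuine boundary issue: Theorem \ref{thm6.3} covers odd $3\leq r\leq 2s-1$ and Corollary \ref{cor5.6} covers even $4\leq r\leq 2s-2$, so the two cited ingredients (which are also all the paper itself invokes) establish the statement only for $3\leq r\leq 2s-1$. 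You were right to sense that $r=2s$ needs separate treatment, but the fix you propose fails because of the length miscomputation; the honest options are to restrict the even subcase to $t\leq s-2$ (i.e.\ $r\leq 2s-1$ overall) or to supply a genuinely different construction for length $q^{2}-1$ with $k$ up to $q-1$, which none of the quoted results provides.
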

\begin{remark}\label{rem6.5}
Zhang and Ge \cite[Theorem 4.2]{ZG17} (see also Corollary \ref{cor5.5}) constructed a family of $q$-ary quantum MDS codes with parameters $[[r\frac{(q^{2}-1)}{2s},r\frac{(q^{2}-1)}{2s}-2k, k+1]]$, $k \leq (s+1)\frac{q+1}{2s+1}-2$, where $2s \mid (q+1)$. If $r \geq 4$, then $(s+\lceil\frac{r-1}{2}\rceil)\frac{q+1}{2s}-1 > (s+1)\frac{q+1}{2s}-1$ and hence the quantum codes of Corollary \ref{cor6.4} have larger minimum distance.
\end{remark}

In the following example, a new family of quantum MDS codes is given from Theorem \ref{thm6.3}.
\begin{example}
Let $(r, s)=(7,4)$ in Theorem \ref{thm6.3}. Then, when $8 \mid (q+1)$, there exists a $[[\frac{7}{8}(q^{2}-1), \frac{7}{8}(q^{2}-1)-2k, k+1]]_{q}$ quantum MDS code for any $1 \leq k \leq \frac{7}{8}(q+1)-2$.
\end{example}
\section{Conclusion}
In this paper, we have constructed six new classes of $q$-ary quantum MDS codes by using Hermitian self-orthogonal GRS codes. Most of our quantum MDS codes have minimum distance larger than $\frac{q}{2}+1$. Some quantum MDS codes presented in \cite{ZG17} and \cite{SYZ17} can be easily derived from ours via the propagation rule. We also generalize and improve some results in \cite{ZG17},\cite{SYZ17}, and \cite{JKW17}.

\section*{Acknowledgment}
This work was supported in part by the 973 Program of
China under Grant 2013CB834204, in part by the National Natural Science
Foundation of China under Grant 61571243 and Grant 61771273, in part by
the Nankai Zhide Foundation, and in part by the Fundamental Research Funds
for the Central Universities of China.

\end{document}